\newcommand{\cc}[1]{\mathsf{#1}}
\newcommand{\CliffordMIP}{\(\cc{Clifford}\)--\(\cc{MIP}^\ast\)\ }
\title{Clifford Strategies in Interactive Protocols are Classically Simulatable}
\author{Itay Shalit\thanks{Weizmann Institute of Science, Rehovot, Israel. Email: itay.shalit@weizmann.ac.il. Supported by the Horizon Europe Research and Innovation Program via ERC Project ACQUA (Grant 101087742).}}
\newtheorem{theorem}{Theorem}
\newtheorem{lemma}{Lemma}
\newtheorem{corollary}{Corollary}
\newtheorem{definition}{Definition}
\newcommand{\K}{\mathsf{K}}
\newcommand{\R}{\mathsf{R}}
\date{\vspace{-8ex}}
\begin{document}
\maketitle
\begin{abstract}
\noindent\(\cc{MIP}^\ast\) is the class of languages decidable by an efficient classical verifier interacting with multiple quantum provers that share entangled qubits but cannot communicate. Notably, \(\cc{MIP}^\ast\) was proved to equal RE, the class of all recursively enumerable languages. \newline
We introduce the complexity class \(\cc{Clifford}\)--\(\cc{MIP}^\ast\), which restricts quantum provers to Clifford operations and classical post-processing of measurement results, while still allowing shared entangled qubits in any quantum state. We show that any strategy in this model can be simulated by classical provers with shared random bits, and therefore admits a local hidden-variable description. Consequently, \CliffordMIP = \(\cc{MIP}\), a vastly smaller complexity class compared to \(\cc{RE}\). \newline
Moreover, we resolve an open question posed by Kalai et al.\ (STOC 2023), by showing that quantum advantage in any single-round non-local game requires at least two provers operating outside the \CliffordMIP computational model. This rules out a proposed approach for significantly improving the efficiency of quantum advantage tests that are based on compiling non-local games into single-prover interactive protocols.
\end{abstract}

\section{Introduction}
The $n$-qubit Clifford group consists of quantum operators on a $2^n$-dimensional Hilbert space and is generated by the Clifford operators: Hadamard (H), phase (S), and CNOT. When supplemented with the $T$ operator, Clifford gates become sufficient for universal quantum computation (UQC)\cite{bravyi2005universal}. The $n$-qubit Clifford group is the normalizer of the $n$-qubit Pauli group within the group of all $2^n \times 2^n$ unitary matrices. In other words, the $n$-qubit Clifford group is the set of all $2^n\times 2^n$ unitary operators that map any Pauli operator to another under conjugation.  \newline 

A quantum state $\ket{\psi}$ is said to be \textit{stabilized} by an operator $S$ if $S\ket{\psi} = \ket{\psi}$, and in that case $S$ is a \textit{stabilizer} of $\ket{\psi}$. The set of all stabilizers of a state forms a group under multiplication, known as a \textit{stabilizer group}. An $n$-qubit \textit{stabilizer state} has a stabilizer group that is generated by $n$ independent, commuting Pauli operators. Any Clifford operator $U$ maps a stabilizer state to another stabilizer state, with the new stabilizer group obtained by conjugating each generator of the original group by $U$. Furthermore, the set of all $n$-qubit stabilizer states is precisely the set of states that can be obtained by applying Clifford operators to the initial state $\ket{0}^{\otimes n}$. \newline

The Gottesman-Knill theorem establishes that a classical computer can efficiently simulate any quantum circuit that is initialized in a stabilizer state and consists solely of Clifford gates and measurements in the computational basis \cite{Gottesman1998Heisenberg}. This efficient classical simulation is performed by tracking the evolution of the stabilizer group of the quantum state, rather than the full $2^n$-dimensional state vector. The stabilizer group can be represented by a list of $n$ generators, which implies that the size of this representation is polynomial in $n$. Accordingly, tracking it can be done efficiently with a classical computer. \newline

While the Gottesman-Knill theorem provides a valuable insight into the limitations of quantum computation restricted to Clifford operations, it does not characterize the power of Clifford-based quantum computation in the non-local setting, that of interactive protocols. In that setting, a classical verifier interacts with non-communicating provers through one or more rounds of classical communication, ultimately deciding whether to accept or reject based on the interaction. Notably, quantum provers can utilize entangled qubits shared before the interaction to execute strategies that classical provers with shared randomness cannot simulate. In other words, the resulting correlations do not admit a local hidden-variable description. A well-known example is the CHSH game, a simple non-local game in which entanglement-sharing quantum provers can outperform the best classical strategies \cite{CHSH}.

\subsection{Prior Work}

Multiple studies have examined the relationship between the emergence of correlations that do not admit a local hidden-variable description and the use of non-Clifford computational resources. Many consider non-local games in which the provers are restricted to using only Clifford operations during the interaction. Assuming that the state shared by the provers is a stabilizer state or is close to being one, separations are proven between the maximal performance achievable by such restricted provers, and that of general quantum provers. We note that in this computational model, each prover's choice of Clifford operations may depend on the question it receives from the verifier. This requires using classically-controlled Clifford operations---whose action depends on the value of a classical bit---which cannot be implemented using only Clifford gates. \newline 

In~\cite{howardMaxNonlocality}, Howard analyzes the CHSH game with provers restricted to classically-controlled Clifford operations and proves that the optimal strategy in this model requires a shared non-stabilizer state. Such states are often referred to in the literature as \textit{magic states}. In contrast, Clifford-restricted provers that share only a stabilizer state have no advantage over classical strategies. Cusumano et al.~\cite{cusumano2025nonstabilizernessviolationschshinequalities} further investigate this setting. They show that the stabilizer entropy of the shared state---a quantitative measure of the non-Clifford resources involved in its preparation---is related to the ability of Clifford-restricted provers to outperform classical strategies in the CHSH game. \newline

Howard and Vala explore this scenario using the framework of magic state distillation (MSD) \cite{howardNonlocalityUniversal}. In MSD, multiple noisy copies of a non-stabilizer state are transformed into fewer, higher-fidelity copies of that state using only Clifford operations. Pure non-stabilizer states together with classically-controlled Clifford operators can be used to implement non-Clifford gates. Thus, if a certain operator can be used to generate noisy non-stabilizer states suitable for MSD, then when combined with Clifford operators, it may suffice for universal quantum computation. The authors show that any (noisy) single-qubit non-Clifford operator which, when applied to one half of a maximally entangled state and followed by Pauli measurements, enables violation of the CHSH inequality, can also be used to prepare magic states suitable for MSD. Hence, any such operation suffices for UQC when combined with Clifford operators.\newline

Finally, Zhang, Pan and Liu construct a non-local game family, in which for an infinite number of games there exists a perfect quantum strategy. Nonetheless, when provers are restricted to performing classically-controlled Clifford operations on a shared stabilizer state, their win rate is strictly less than 1~\cite{Zhang_2024}. \newline 

We contribute to this line of work by examining interactive protocols in which the provers may share an arbitrary quantum state but are restricted to performing only Clifford operations in addition to classical post-processing of measurement results during the interaction. Unlike prior work, in our setting the provers may share any quantum state, but are prohibited from performing classically-controlled Clifford operations, as implementing these operations requires using non-Clifford gates. For a discussion of the motivation behind this computational model, see Section~\ref{sec:comp_model}. \newline 

\subsection{Results}

We compare two different computational models. In the first model, which we refer to as the \emph{Classical Computational Model}, the provers can only perform classical computations and are allowed to share random bits prior to the interaction. In the second model, termed the \emph{Clifford Computational Model}, the provers are quantum but are restricted to applying only Clifford operators to their qubits, in addition to performing classical post-processing of measurement results. Moreover, in this model the provers are allowed to share entangled qubits in any state before the interaction begins. \newline 

We prove the following theorem. \newline 

\textbf{Theorem:} \textit{For any $\K, \R \in \mathbb{N}$, let G be an interactive protocol with $\K$ provers 
and $\R$ rounds, and let S be a quantum strategy for G in the Clifford Computational Model. 
Then, there exists a classical strategy $\tilde{S}$ for G which is equivalent to S.} \newline 

Strategies are considered equivalent if their interaction with the protocol's verifier produces the same distribution of questions and answers. The computation of random bits shared between the provers that execute the classical strategy we describe is not known to be efficient, unless the entangled qubits shared by the provers which execute the quantum strategy are in a stabilizer state. However, the computation performed by the classical provers throughout the interaction is efficient in terms of the circuit size of provers that execute the quantum strategy and the number of qubits they use. \newline

This result directly implies an equality between the complexity classes \CliffordMIP and $\cc{MIP}$. $\cc{MIP}$ is the class of languages decidable by a classical verifier which interacts with multiple classical provers that may sample and share random bits prior to the interaction. We define \CliffordMIP as the class of languages decidable by a classical verifier which interacts with multiple quantum provers, that share entangled qubits but are limited to performing only Clifford operations in addition to classical post-processing of measurement results. Notably, the class $\cc{MIP}^\ast$ which is defined similarly to \CliffordMIP but allows the provers to perform any quantum operations, equals the class $\cc{RE}$ of recursively enumerable languages—vastly larger than $\cc{MIP}$ \cite{mip*=re}. The stark difference between $\cc{MIP}^\ast$ and \CliffordMIP highlights the significant limitations of strategies restricted to Clifford operations in the non-local setting, compared to general quantum strategies.\newline 

Furthermore, we use our results to address a question raised by Kalai et al., in their work that introduces a method for compiling any non-local game into a single-prover interactive protocol~\cite{kalai2023quantum}. A non-local game is a \textbf{single-round} multiple-prover interactive protocol. In the compiled game, quantum homomorphic encryption (QHE) is employed to simulate the spatial separation of the different provers, and is required for evaluating the strategies of all but one of the provers on encrypted queries. In a compiled game originally played with $\K$ provers, the verifier encrypts the queries of the first $\K-1$ provers, and in each round sends one query to the prover. The prover computes its encrypted answer on the encrypted query homomorphically (without decrypting it), and the verifier decrypts the answer upon receiving it. In the $\K^{th}$ (final) round the verifier sends the query in the clear and accordingly the prover answers in the clear. Finally, the verifier decides whether to accept or reject. \newline  

It is shown that, assuming the existence of quantum homomorphic encryption satisfying a natural form of correctness with respect to auxiliary quantum input, the quantum completeness guarantee of any non-local game is preserved under this compilation method. Moreover, the classical soundness guarantee is also preserved against computationally efficient strategies. In follow-up work, similar (but weaker) results were obtained regarding the quantum soundness of compiled non-local games ~\cite{cui2024computational,kulpe2024bound,natarajan2023bounding}. When this compilation method is paired with a non-local game exhibiting quantum advantage, it naturally suggests a single-prover protocol for verification of quantum advantage. In this protocol, a verifier plays the compiled game with a single prover, who is asked to demonstrate quantum advantage by surpassing the classical soundness guarantee of the game. \newline 

This suggestion meets two of the three core requirements from protocols for verifying quantum advantage:

\begin{enumerate}
\item \emph{Provable quantum advantage}—for a non-local game in which the best possible quantum success probability is substantially higher than the classical one, quantum advantage is maintained in the compiled protocol assuming that potential adversaries may only perform efficient computations. 
\item \emph{Efficient classical verifiability}—a classical verifier can efficiently verify the success of an adversary in the protocol. For example, this can be done by running the protocol multiple times to evaluate the success rate of the adversary.
\end{enumerate}

However, the protocol might fail the third requirement---it is not necessarily realizable with near-term quantum devices.
This is due to the high computational cost of applying quantum fully-homomorphic encryption (QFHE), which allows performing any efficient quantum computation on encrypted data. This cost renders the resulting protocol rather inefficient. \newline 

In contrast, there exists a QHE scheme handling only Clifford gates which is based on quantum one-time pad and is much simpler than QFHE protocols; in particular, it only requires applying the intended Clifford gates and additional classical computations. Therefore, the authors ask whether there exists a non-local game where, once an appropriate bipartite state is prepared, all but one of the prover strategies can be implemented using only Clifford gates. If such a game exists, it could be compiled using a QHE scheme handling only Clifford gates, resulting in a truly efficient protocol that meets all three conditions.\newline 

We provide a negative answer to this question by proving the following statement. \newline

\textbf{Theorem:} \textit{Let there be a non-local game G and a quantum strategy $S$, such that the success probability of $S$ in $G$ exceeds that of any classical strategy. Then, at least two provers in $S$ must operate outside the Clifford Computational Model}. \newline

Operating outside the Clifford Computational Model means using quantum operations other than Clifford operations and classical post-processing of measurement results. We achieve this result by considering any quantum strategy for a non-local game in which all players except at most one operate in the Clifford Computational Model, and describing an equivalent classical strategy for the game. \newline 

\subsection{Discussion} \label{sec:comp_model}

In prior work that considers the power of Clifford strategies in non-local games, these strategies are generally defined as follows:
\begin{enumerate}
    \item The state shared by the provers is a stabilizer state or is close to being one. A common quantitative measure of the degree to which a state is non-stabilizer, is stabilizer entropy (SE)~\cite{cusumano2025nonstabilizernessviolationschshinequalities}.
    \item Throughout the interaction, the provers may perform classically-controlled Clifford operators, even though implementing such controlled operators requires using non-Clifford gates. This enables each prover to adapt its Clifford operators to the question it receives from the verifier.
\end{enumerate}

The \CliffordMIP computational model defined above differs from computational models considered in prior work in two respects:
\begin{enumerate}
\item The provers are allowed to share any quantum state prior to the interaction. 
\item Throughout the interaction, the provers are limited to performing Clifford operations and classical post-processing of measurement results, which \textbf{does not} enable them to utilize classically-controlled Clifford gates. 
\end{enumerate}

One motivation for our choice of computational model is its relevance to cryptographic applications. Kalai et al.\ pose the question of whether there exists a non-local game in which quantum advantage can be achieved although all provers but one are restricted to Clifford operations. Such a game could then be compiled using a highly-efficient QHE protocol, even if classical post-processing of measurement results is allowed. This would result in a truly efficient protocol for testing quantum advantage. \newline 

However, we are not aware of any simple quantum homomorphic encryption (QHE) scheme that supports classically-controlled Clifford operations. Notably, the construction of quantum fully-homomorphic encryption (QFHE) can be reduced to building a QHE scheme that supports such operations~\cite{brakerski2018qfhe, mahadev2020qfhe}. It follows that an efficient QHE scheme for classically-controlled Clifford operations would imply the existence of a similarly efficient QFHE scheme. This would arguably be considered a major breakthrough. \newline 

Therefore, in considering the possibility of a quantum strategy in a non-local game which achieves quantum advantage and yet can be efficiently compiled using the method of Kalai et al., we must prohibit the provers from using classically-controlled Clifford operations. Nevertheless, since the quantum one-time pad can be used to compile any Clifford strategy---regardless of the state of the qubits shared by the provers---we do not need to place restrictions on that state. \newline

Another motivation for choosing the \CliffordMIP computational model is that in the setting of Clifford-limited provers, it is the strongest non-universal model that allows the provers to share qubits in an arbitrary quantum state. This is because magic states, which may be shared by the provers, can be used together with controlled Clifford operations to implement $T$ gates~\cite{howardNonlocalityUniversal}. In other words, if provers may share qubits in any state, it becomes crucial to restrict them from using classically-controlled Clifford gates to avoid collapsing the model into one capable of universal quantum strategies. \newline 

Additionally, the \CliffordMIP computational model can be viewed as a natural analogue of the Clifford quantum computation model considered in the Gottesman-Knill theorem, adapted to the setting of non-local games. However, while the Gottesman-Knill theorem applies only to circuits initialized in stabilizer states, in \CliffordMIP the provers may share qubits in arbitrary quantum states. Overall, our results can be interpreted as a generalization of the Gottesman-Knill theorem to the non-local setting. \newline 

We note that unlike prior work that focuses on specific non-local game families, our results apply to all non-local games and multiple-round interactive protocols. Overall, our work advances the understanding of the connection between the emergence of correlations that do not admit a local hidden-variable description and the use of non-Clifford computational resources. Our proof provides
insight into the mechanisms underlying the classical simulatability of prover strategies in the \CliffordMIP computational model (see \ref{sec:proof_outline} for details). Furthermore, by providing an answer to the question raised by Kalai et al., we contribute to a better understanding of potential paths towards designing realizable protocols for verification of quantum advantage. 

\subsection{Future Work}

\begin{enumerate}

\item \(\cc{MIP}^\ast=\cc{RE}\), but we prove that \CliffordMIP\(=\cc{MIP}\). That is, when provers are not allowed to use $T$ gates and are limited to using Clifford gates and classical post-processing of measurement results, the decision power of quantum interactive protocols is equivalent to that of classical interactive protocols. It is left to understand how the decision power of quantum interactive proofs scales with the number of $T$ gates the provers are allowed to utilize, when this number is higher than zero but still is bounded.

\item Our work establishes that for any quantum strategy within the Clifford Computational Model for a multiple-round interactive protocol, as well as any quantum strategy for a single-round interactive protocol where at most one prover is outside the Clifford Computational Model, there exist equivalent classical strategies. This result is optimal for single-round protocols, as there are known protocols where a strategy involving two provers outside the Clifford Computational Model outperforms the best possible classical strategy. A famous example of such a protocol is the CHSH game. However, it is still needed to determine whether a multiple-round protocol exists such that a quantum strategy, with exactly one prover outside the Clifford Computational Model, surpasses the best classical strategy.

\item We raise the question of whether there exists a limited computational model distinct from Clifford-based models, for which a highly-efficient QHE scheme can be devised, and which still suffices to achieve quantum advantage in a non-local game.

\end{enumerate}

\subsection{Proof Outline} \label{sec:proof_outline}

We provide an overview of the proof of our main result. The proofs of our other results build on similar techniques. Our main result can be phrased informally as follows. \newline 

\textbf{Theorem:} \textit{Let $G$ be an interactive protocol with $\K$ provers and $\R$ rounds, and let $S$ be a quantum strategy for $G$ in the Clifford Computational Model. Then, there exists a classical strategy $\tilde{S}$ for $G$ which is equivalent to $S$.
} \newline 

For a formal statement of this result, see Theorem \ref{thm: multi_round_clifford_classical}. In our proof, given an interactive protocol $G$ and a quantum strategy $S$, we construct an equivalent classical strategy $\tilde{S}$ for $S$, provided that there exists an implementation of $S$ where all provers use only Clifford operations and classical post-processing of measurement results. The classical strategy $\tilde{S}$ operates as follows:

\begin{enumerate}
    \item A series of arbitrary queries which corresponds to a full interaction of the verifier in G with the provers, is hard-coded into the classical strategy.
    \item Before the interaction begins, the classical provers jointly sample answers from the distribution that arises if provers which execute the quantum strategy $S$ receive the hard-coded questions. The provers executing $\tilde{S}$ share an encoding of this precomputed interaction.
    \item During the interaction, the verifier sends questions that may differ from the hard-coded ones, and each classical prover computes corrections to their precomputed measurement results based on the differences between the hard-coded and received questions. Returning these corrected answers results in a strategy that is equivalent to S.
\end{enumerate}

Consider a circuit implementation $C$ of the strategy $S$. For simplicity, let us assume that for any prover $i\in[\K]$ and for any round $r\in [\R]$, the computation of the $i^{th}$ prover which executes $C$ in the $r^{th}$ round consists of applying a Clifford unitary operation $U_i^r$ followed by a measurement of a subset of the qubits held by the $i^{th}$ prover, whose result serves as the prover's answer to the verifier. Our detailed proof accounts for the general case of strategies that may include intermediate measurements.\newline

We shall describe an interaction between  provers executing the classical strategy $\tilde{S}$ and the verifier. Let us define the following notation, where subscripts index the different provers and superscripts index the rounds of interaction:
\begin{enumerate}
\item $q:=\left(\left(q_i^r\right)_{i=1}^\K\right)_{r=1}^\R$ — the sequence of questions that are hard-coded into $\tilde{S}$.
\item $a:=\left(\left( a_i^r\right)_{i=1}^{\K}\right)_{r=1}^\R$ — the answers sampled by the provers before the interaction. 
\item $\tilde{q}:=\left(\left(\tilde{q}_i^r \right)_{i=1}^\K\right)_{r=1}^\R$ — the questions sent by the verifier throughout the interaction.
\item $\tilde{a}:=\left(\left(\tilde{a}_i^r\right)_{i=1}^\K\right)_{r=1}^\R$ — the answers returned by the provers throughout the interaction.
\end{enumerate}

For any $i\in [\K]$, we define the following sequence of operators by recursion on $r\in[\R]$, as follows.

\[
R_{i}^r = U_{i}^r \left( X_{\tilde{q}_i^r \oplus q_i^r} \otimes R_{i}^{r-1} \right) U_{i}^{r\dagger},
\]

where:
\begin{itemize}
    \item $X_{\tilde{q}_i^r \oplus q_i^r}$ is a tensor product of Pauli-$X$ and Pauli-$I$ operators, that accounts for the differences between the expected and received questions. That is, $X_{\tilde{q}_i^r \oplus q_i^r}\ket{\tilde{q}_i^r}=\ket{q_i^r}$.
    \item For any $i\in[\K]$, $R_{i}^{0}$ is the identity operator $I$.
\end{itemize}

The Clifford group is the normalizer of the Pauli group within the group of all unitary operators. That is, when a Pauli operator is conjugated by a Clifford operator, the result is a Pauli operator. For any $i\in[\K], r\in[\R]$, $U_i^r$ is a Clifford operator. Consequently, by performing an induction over $r$, one can verify that $R_i^r$ is a Pauli operator. \newline 

The $i^{th}$ prover computes its answer $\tilde{a}_i^r$ in the $r^{th}$ round as follows.
\[\ket{\tilde{a}_i^r}\bra{\tilde{a}_i^r}\otimes I
= R_i^r(\ket{a_i^r}\bra{a_i^r}\otimes I)R_i^{r\dag}\]

Here, $I$ is the identity operator which operates on the qubits that the prover does not measure to obtain its answer. We note that a Pauli operator indeed maps any standard basis element to another standard basis element, and potentially introduces a phase. Therefore, the above relation is well-defined. Crucially, each prover can compute its corresponding operators locally—without exchanging information with the other provers. It follows that the classical strategy respects the locality constraints of the protocol. \newline

The following equality plays a major role in the mechanism that enables these computations to be performed locally: 
\begin{align}
\label{eq:commutation}
U_{i}^r \left( X_{\tilde{q}_i^r \oplus q_i^r} \otimes R_{i}^{r-1} \right) = R_i^r U_i^r
\end{align}

Loosely speaking, this equality means that first applying an operator that maps between the query $q_i^r$ and the query $\tilde{q}_i^r$ and then applying $U_i^r$, is equivalent to applying $U_i^r$ and then another Pauli operator, $R_i^r$. Rather than mapping one query to another, $R_i^r$ is used to map one answer to another, as can be seen by examining the simple case of a single-round, two-prover game: 
\begin{align}
\label{eq:replace}
\Pr_S[a_1,a_2 \mid q_1, q_2] &= |\bra{a_1,a_2} (U_1\otimes U_2)\ket{q_1}\ket{q_2}\ket{\psi}|^2 \notag \\
&= | \bra{a_1,a_2} U_1 (X_{q_1\oplus \tilde{q}_1}\otimes I)\otimes U_2 (X_{q_2\oplus \tilde{q}_2}\otimes I)\ket{\tilde{q}_1}\ket{\tilde{q}_2}\ket{\psi}|^2 \notag \\
&= |\bra{a_1,a_2}(R_1 U_1\otimes R_2 U_2)\ket{\tilde{q}_1}\ket{\tilde{q}_2}\ket{\psi}|^2 \notag \\
&= |\bra{\tilde{a}_1,\tilde{a}_2} (U_1\otimes U_2)\ket{\tilde{q}_1}\ket{\tilde{q}_2}\ket{\psi}|^2 \notag \\
&= \Pr_S[\tilde{a}_1,\tilde{a}_2 \mid \tilde{q}_1, \tilde{q}_2]
\end{align}

where $U_1$ operates on the qubits held by the first prover, and $U_2$ on those held by the second prover. We assume here without loss of generality that each of the provers measures all their qubits as the last step of their computation, and the result serves as their answer to the verifier. That is, for any $i \in [\K]$, $R_i\ket{a_i}=\ket{\tilde{a}_i}$. The third equality above holds due to Eq.\ \eqref{eq:commutation}. Denoting by $\pi(\tilde{q}_1, \tilde{q}_2)$ the probability that the verifier asks the question $(\tilde{q}_1,\tilde{q}_2)$, we have:
\begin{align}
&\Pr_{\tilde{S}}[\tilde{q}_1,\tilde{q}_2,\tilde{a}_1,\tilde{a}_2] = \pi(\tilde{q}_1, \tilde{q}_2)\cdot \Pr_{\tilde{S}}[\tilde{a}_1,\tilde{a}_2 \mid \tilde{q}_1,\tilde{q}_2]
\end{align}

Recall that the answer of the $i^{th}$ prover is $R_i \ket{a_i}$, where $a_i^r$ is the corresponding answer in the precomputed interaction and $R_i$ depends on the question sent to the $i^{th}$ prover by the verifier. Therefore, assuming both provers which execute $\tilde{S}$ received the questions $(\tilde{q}_1,\tilde{q}_2)$, their answers are $(\tilde{a}_1,\tilde{a}_2)$ if and only if the answers of the provers in precomputed interaction are $(a_1, a_2)$. Hence, we may exchange  $\Pr_{\tilde{S}}[\tilde{a}_1,\tilde{a}_2 \mid \tilde{q}_1,\tilde{q}_2]$ with $\Pr_{S}[a_1,a_2 \mid q_1,q_2]$ to obtain: 
\begin{align}
&\Pr_{\tilde{S}}[\tilde{q}_1,\tilde{q}_2,\tilde{a}_1,\tilde{a}_2] = \pi(\tilde{q}_1, \tilde{q}_2)\cdot \Pr_{S}[a_1,a_2 \mid q_1,q_2]
\end{align}

Eq.\ \eqref{eq:replace} states that $\Pr_S[a_1,a_2 \mid q_1, q_2]=\Pr_S[\tilde{a}_1,\tilde{a}_2 \mid \tilde{q}_1, \tilde{q}_2]$, and therefore, 
\begin{align}
&\Pr_{\tilde{S}}[\tilde{q}_1,\tilde{q}_2,\tilde{a}_1,\tilde{a}_2] = \pi(\tilde{q}_1, \tilde{q}_2)\cdot \Pr_{S}[\tilde{a}_1,\tilde{a}_2 \mid \tilde{q}_1,\tilde{q}_2]=\Pr_{S}[\tilde{q}_1,\tilde{q}_2,\tilde{a}_1,\tilde{a}_2]
\end{align}

That is, the probability of the classical provers producing any given interaction with the verifier, is equal to the probability of quantum provers producing this interaction. In other words, $\tilde{S}$ and $S$ are equivalent strategies. \newline

Next, we briefly discuss how to extend this result to the case of multiple rounds and provers. For all $r\in[\R]$, we denote by $\rho_{q,a,r}$ the post-measurement state of all qubits held by the $\K$ provers which execute the quantum strategy $S$, right after all provers return the answers of the $r^{th}$ round to the verifier, assuming the following conditions hold:

\begin{itemize}
    \item For any $j \leq r$, the provers receive the queries $(q_1^j, \dots, q_K^j)$ from the verifier in the $j^\text{th}$ round.
    \item For any $j\leq r$, the answers returned by the provers in the $j^{th}$ round are $a^j = \left( a_i^j \right)_{i=1}^\K$.   
\end{itemize}

$\rho_{\tilde{q},\tilde{a},r}$ is defined analogously. We prove by direct calculation that the following equality holds:

\begin{equation}
\label{eq:state_link_introduction}
\forall r\in[\R]: \quad \rho_{\tilde{q}, \tilde{a}, r} = \left( \bigotimes_{i=1}^\K R_{i}^r \right) \rho_{q, a, r} \left( \bigotimes_{i=1}^\K R_{i}^r \right)^\dagger
\end{equation}

This invariance property enables us to derive an equivalence that is the generalization of Eq.\ \eqref{eq:replace} for multiple rounds and provers:
\begin{align}
\label{eq:multi_round_replace}
&\forall r\in[\R]: \Pr_{S}[\tilde{a}^r \mid \tilde{q}^1,\tilde{a}^1,...,\tilde{q}^{r-1},\tilde{a}^{r-1}, \tilde{q}^r]=\Pr_{S}[a^r \mid q^1,a^1,...,q^{r-1},a^{r-1}, q^r]
\end{align}

That is,  
for each round \(r\in[\R]\), the quantum strategy \(S\) assigns the 
same probability to the classical provers’ answers \(\tilde{a}^r\) as it 
does to the precomputed answers \(a^r\). Here, \(\Pr_{S}[a^r \mid q^1,a^1,\dots,q^{r-1},a^{r-1},q^r]\) is the probability 
that quantum provers executing \(S\) produce the answer \(a^r\) in round~\(r\), 
given the interaction history \((q^1,a^1,\dots,q^{r-1},a^{r-1},q^r)\), where $a^j=(a_1^j,...,a_\K^j)$, $q^j=(q_1^j,...,q_\K^j)$. \newline

Let us elaborate on the meaning of Eq.\ \eqref{eq:multi_round_replace}:
\begin{enumerate}
\item Consider the answer \(\tilde{a}^r\) that the classical provers 
\(\tilde{S}\) return in round~\(r\) after receiving the query \(\tilde{q}^r\). 
The left-hand side of Eq.\ ~\eqref{eq:multi_round_replace} is the probability that 
quantum provers (under strategy \(S\)) also return \(\tilde{a}^r\) in 
round~\(r\), assuming they had followed the \emph{same} interaction 
history 
\(\bigl(\tilde{q}^1,\tilde{a}^1,\dots,\tilde{q}^{r-1},\tilde{a}^{r-1},\tilde{q}^r\bigr)\) as the classical provers did.
\item Consider $a^r$, the answer in the $r^{th}$ round of the \emph{precomputed} interaction that was sampled by the provers executing $\tilde{S}$ before the interaction with the verifier. The right-hand side of Eq. \eqref{eq:multi_round_replace} is the probability that quantum provers executing strategy $S$ return the answer $\tilde{a}^r$ in round $r$, assuming that their interaction history is $(q^1,a^1,...,q^{r-1},a^{r-1}, q^r)$—exactly as in the precomputed interaction.

\end{enumerate}

Since these two probabilities coincide for every round, at each round of the interaction the classical strategy \(\tilde{S}\) effectively reproduces the outcome distribution of the quantum strategy \(S\)—a probability distribution from which the precomputed interaction is sampled (albeit with hard-coded questions). Eq.\ \eqref{eq:multi_round_replace} enables us to complete the proof for multiple-round protocols analogously to the way we used Eq.\ \eqref{eq:replace} to prove the single-round case. \newline 

Regarding the efficiency of the simulation, Pauli operators can be efficiently represented classically. Additionally, the mapping of one Pauli operator to another via a Clifford operator, as well as the mapping of one standard basis element to another via a Pauli operator, can both be efficiently computed classically. This ensures that except the pre-computation of the shared encoded interaction,
the computation performed by the classical provers throughout the interaction with the verifier is efficient in terms of the circuit size of provers that apply the corresponding quantum strategy and the number of qubits they use. 
Sampling the answers before the interaction is only guaranteed to be efficient if the quantum provers share a stabilizer state.

\subsection{Acknowledgements}
Thanks are due to Zvika Brakerski and Thomas Vidick for helpful discussions and valuable insights, which contributed to the development of this work.

\section{Preliminaries}

We assume basic knowledge of the theory of quantum information and computation. For an introduction to this topic, see \cite{nielsenClifford}. We open with a series of definitions regarding interactive protocols.

\begin{definition}
An \emph{interactive protocol} with \( \K \) provers and \( \R \) rounds is defined as
\[
G = \left(  Q,  A ,\left\{ \pi_r \right\}_{r=1}^\R,\ v \right).
\]
Here:
\begin{itemize}
    \item $Q$ is a finite set of possible queries that can be sent to any prover, and $A$ is a finite set of possible answers that any prover can return to the verifier.
    \item The protocol proceeds in \( \R \) rounds. In each round \( r \) (\( r\in[\R] \)):
    \begin{itemize}
        \item The verifier sends a query \( q_i^r \in Q \) to each prover \( i \).
        \item Each prover \( i \) replies with an answer \( a_i^r \in A \).
    \end{itemize}
    \item The provers cannot communicate with each other during the protocol.

    \item A \emph{history} up to round \( r \) (\( r\in[\R] \)) is a sequence
    \[
    h^r := \left(\left( q_i^j, a_i^j\right)_{i=1}^\K\right)_{j=1}^r
    \]

    consisting of the queries sent to and answers received from all provers in rounds \( 1 \) to \( r \). Let us denote the set of histories up to round $r$ by $H^r$. \newline 

    We denote all questions sent by the verifier in round $j$ $(j\in [r])$ of $h^r$ by
    \[q^j := (q_i^j)_{i=1}^\K\]

    Similarly, we denote all answers sent by the provers in round $j$ $(j\in [r])$ of $h^r$ by
    \[a^j := (a_i^j)_{i=1}^\K\]   

    \item For each prover \( i \), a \emph{local history} up to round \( r \) is a sequence
    \[
    h_i^r = \left( q_i^j, a_i^j \right)_{j=1}^r,
    \]
    consisting of the questions received by prover \( i \) and the answers provided by prover \( i \) in rounds \( 1 \) to \( r \). Let us denote the set of local histories for prover $i$ up to round $r$ by $H_i^r$.

    \item For each round \( r \), the function \( \pi_r \) assigns a probability mass to the sequence of questions sent by the verifier in that round, conditioned on the \emph{history} up to that point:
    \[
    \pi_r : H^{r-1} \times Q^\K \rightarrow [0,1],
    \]

    \item The function \( v \) determines the outcome of the protocol (accept or reject) based on the \emph{history} after \( R \) rounds:
    \[
    v: H^\R \rightarrow \{0,1\},
    \]
    where \( H^\R \) is the set of possible histories after \( \R \) rounds.
\end{itemize}

\end{definition}

\begin{definition}
A non-local game is an interactive protocol with a single round of communication.
\end{definition}

\begin{definition}
A \emph{quantum strategy} \( S \) for an interactive protocol \( G = \left(Q, A, \left\{ \pi_r \right\}_{r=1}^\R, v \right) \) is defined as
\[
S = \left( \ket{\psi},\ \left(\left( M_{i}^r \right)_{i=1}^\K \right)_{r=1}^\R\right).
\]

Here, \( \ket{\psi} \) is a pure quantum state over \( n \) qubits (for some \( n \in \mathbb{N} \)) which are shared by the provers. \( M_{i}^r \) (\( i\in[\K], \ r\in[\R] \)) is the projective measurement (PVM) performed by the \( i^\text{th} \) prover in the \( j^\text{th} \) round, on the question (bit-string) it receives from the verifier, and on its part of \( \ket{\psi} \). The outcomes of \( M_{i}^j \) correspond to elements of \( A \).
\end{definition}

Some definitions of a strategy allow a prover to operate on ancilla qubits in addition to the prover's part of $\ket{\psi}$. However, we assume without loss of generality that all ancilla qubits a prover operates on are included in the part of $\ket{\psi}$ that is held by the prover.

\begin{definition}
A \emph{classical strategy} with shared randomness \( S \) for \( G \) is defined as

\[
S = \left( P,\ \left(\left( f_{i}^r\right)_{i=1}^\K\right)_{r=1}^\R \right).
\]
Here, \( P \) is a probability distribution on \( \{0,1\}^\lambda \) (for some $\lambda \in \mathbb{N}$), from which the random bits that are shared by the provers are sampled at the beginning of the protocol. For each \( i\in[\K], \ r\in[\R] \):
\begin{itemize}
    \item \( H_i^{r-1} \) denotes the set of possible local histories of prover \( i \) up to round \( r-1 \).
    \item \( f_{i}^r : \{0,1\}^\lambda \times H_i^{r-1} \times Q \rightarrow A  \) is a function that, given a shared random bit-string \( s \in \{0,1\}^\lambda \), a local history \( h_{i}^{r-1} \in H_{i}^{r-1} \), and a question \( q_i^r \in Q \) received in round \( r \), outputs the answer \( a_i^r \in A \) of prover \( i \) in round \( r \).
\end{itemize}
\end{definition}

\begin{definition}
Let \( G \) be an interactive protocol, and let $S = \left( \ket{\psi},\ \left( M_i\right)_{i=1}^\K\right)$ be a quantum strategy for it. An \emph{implementation} of \( S \) is a sequence of quantum circuits $\left(\left( C_i^r \right)_{i=1}^\K\right)_{r=1}^\R$, such that for each prover \( i \) and each round \( r \), \( C_i^r \) implements the PVM \( M_i^r \).
\end{definition}

\begin{definition}
For any strategy \( S \) for \( G \), whether quantum or classical, we denote by \( \Pr_{S} \) the joint probability distribution over histories \( h \) after \( R \) rounds.

Formally, for any history \( h := \left( \left( q_i^r, a_i^r \right)_{i=1}^\K \right)_{r=1}^\R \):
\[
\Pr_{S} \left[ h^R \right] = \prod_{r=1}^\R \left( \pi_r \left( h^{r-1}, q^r \right) \cdot \Pr_S \left( a^r \mid h^{r-1}, q^r\right) \right),
\]
where:
\begin{itemize}
    \item \( h^{r-1} \) is the history up to round \( r-1 \).
    \item \( \Pr_S \left[a^r \mid h^{r-1}, q^r\right]\) is the probability that the provers following strategy \( S \), provide answers \( a_1^r, \dots, a_K^r \) in round \( r \) given the history \( h^{r-1} \) and the questions \( q_1^r, \dots, q_K^r \) in round \( r \).
    
\end{itemize}
\end{definition}

\begin{definition}
Two strategies \( S \) and \( S' \) for \( G \), whether quantum or classical, are considered \emph{equivalent} if and only if they induce the same joint probability distribution over interaction histories.
\end{definition}

\begin{definition}
\label{def:PauliGroup}\cite{nielsenClifford}

The single-qubit \emph{Pauli group} $\mathcal{P}_1$ consists of all $2 \times 2$ unitary matrices generated by the Pauli matrices $I$, $X$, $Y$, and $Z$, along with multiplicative factors $\{\pm 1, \pm i\}$:
\[
\mathcal{P}_1 = \left\{ e^{i \theta} P \mid \theta \in \left\{0, \tfrac{\pi}{2}, \pi, \tfrac{3\pi}{2}\right\},\ P \in \{I, X, Y, Z\} \right\},
\]
where
\[
I = \begin{pmatrix} 1 & 0 \\ 0 & 1 \end{pmatrix},\quad
X = \begin{pmatrix} 0 & 1 \\ 1 & 0 \end{pmatrix},\quad
Y = \begin{pmatrix} 0 & -i \\ i & 0 \end{pmatrix},\quad
Z = \begin{pmatrix} 1 & 0 \\ 0 & -1 \end{pmatrix}.
\]
The $n$-qubit Pauli group $\mathcal{P}_n$ is defined as the $n$-fold tensor product of single-qubit Pauli operators with phase factors:
\[
\mathcal{P}_n = \left\{ e^{i \theta} P_1 \otimes P_2 \otimes \cdots \otimes P_n\ \mid \ \theta \in \left\{0, \tfrac{\pi}{2}, \pi, \tfrac{3\pi}{2}\right\},\ P_i \in \{I, X, Y, Z\} \right\}.
\]
\end{definition}.

\begin{definition} The set of Clifford quantum gates consists of the following unitary operations:

\[
\begin{array}{ccc}
H = \frac{1}{\sqrt{2}} \begin{pmatrix} 1 & 1 \\ 1 & -1 \end{pmatrix} &
\text{CNOT} = \begin{pmatrix}
1 & 0 & 0 & 0 \\
0 & 1 & 0 & 0 \\
0 & 0 & 0 & 1 \\
0 & 0 & 1 & 0
\end{pmatrix} &
S = \begin{pmatrix}
1 & 0 \\
0 & i
\end{pmatrix}
\end{array}
\]

\end{definition}

\begin{definition}
\label{def:Clifford_map}
Let $U(2^n)$ denote the group of $2^n \times 2^n$ unitary operators acting on $n$ qubits. The \emph{Clifford group} on $n$ qubits, denoted by $\mathcal{C}_n$, is the normalizer of the Pauli group $\mathcal{P}_n$ in $U(2^n)$:
\[
\mathcal{C}_n = \left\{ U \in U(2^n) \mid U \mathcal{P}_n U^\dagger = \mathcal{P}_n \right\}
\]
That is, for all $U \in \mathcal{C}_n$ and all $P \in \mathcal{P}_n$, the conjugation of $P$ by $U$ yields another Pauli operator:
\[
U P U^\dagger \in \mathcal{P}_n
\]
Furthermore, the Clifford group is the largest subgroup of $U(2^n)$ that preserves the Pauli group under conjugation.
\end{definition}

Next, let us state a lemma regarding the Clifford group.

\begin{lemma}
\cite{nielsenClifford} \label{lemma:clifford}
The Clifford group on $n$ qubits, denoted by $\mathcal{C}_n$, is generated by the following set of matrices:

\[\{H_i: i\in [n]\} \cup \{S_i: i\in [n]\} \cup \{CNOT_{i,j}: i,j \in [n], i\neq j\}\]

Where $H_i, S_i$ are $H,S$ (respectively) operating on the $i^{th}$ qubit, and $CNOT_{i,j}$ is $CNOT$ operating on the $i^{th},j^{th}$ qubits.
\end{lemma}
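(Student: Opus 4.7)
The plan is to use the standard symplectic representation of the Clifford group over $\mathbb{F}_2$. The easy direction, that $H$, $S$, and $\mathrm{CNOT}$ lie in $\mathcal{C}_n$, is immediate by direct computation: one checks that conjugating each Pauli generator $X_i$, $Z_i$ by these gates produces another Pauli operator. For instance, $HXH^\dagger = Z$, $HZH^\dagger = X$, $SXS^\dagger = Y$, $SZS^\dagger = Z$, together with the familiar action of $\mathrm{CNOT}_{i,j}$ on the four Pauli generators supported on qubits $i$ and $j$. Hence the group $\mathcal{G}$ generated by the listed matrices is contained in $\mathcal{C}_n$.

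For the reverse inclusion, first observe that any $U \in \mathcal{C}_n$ is determined up to a global phase by its action on the generators $\{X_i, Z_i\}_{i=1}^n$: since these Paulis act irreducibly on $(\mathbb{C}^2)^{\otimes n}$, Schur's lemma implies that $UV^\dagger$ is a scalar whenever $U$ and $V$ induce the same conjugation action on $\mathcal{P}_n$. Encoding each $n$-qubit Pauli modulo phase as a vector in $\mathbb{F}_2^{2n}$ via $X_i \leftrightarrow e_i$ and $Z_i \leftrightarrow e_{n+i}$, the commutator of two Paulis becomes the standard symplectic form, and conjugation by any $U \in \mathcal{C}_n$ descends to a symplectic transformation on $\mathbb{F}_2^{2n}$.

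It therefore suffices to show that the symplectic transformations arising from $H_i$, $S_i$, and $\mathrm{CNOT}_{i,j}$ generate the full symplectic group $\mathrm{Sp}(2n,\mathbb{F}_2)$. This is a Gaussian-elimination argument: $\mathrm{CNOT}_{i,j}$ implements elementary row/column additions between qubits $i$ and $j$, $H_i$ swaps the $X$-slot and $Z$-slot of qubit $i$, and $S_i$ adds the $X$-slot of qubit $i$ into its $Z$-slot. Composing these moves systematically reduces any $2n \times 2n$ symplectic matrix over $\mathbb{F}_2$ to the identity. Consequently, for every $U \in \mathcal{C}_n$ one can exhibit $V \in \mathcal{G}$ so that $UV^\dagger$ induces the trivial symplectic transformation; by Schur's lemma, $UV^\dagger$ is then a scalar multiple of an element of $\mathcal{P}_n$. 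Since every Pauli itself already lies in $\mathcal{G}$ (for example $Z = S^2$, $X = HS^2H$, and $Y$ up to a phase equals $SXS^\dagger$), we conclude that $U \in \mathcal{G}$ up to a global phase.

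The principal obstacle is the generation step: one must verify carefully that $\mathrm{CNOT}$ gates on arbitrary qubit pairs, combined with the on-site $H_i$ and $S_i$, realize every elementary symplectic move without uncontrolled cross-contamination between the $X$-slots and $Z$-slots of distinct qubits. Tracking the phases that appear along the way, so as to justify absorbing $UV^\dagger$ into a global phase times a Pauli, is also mildly delicate, but all such phases remain in the $\{\pm 1, \pm i\}$ center already built into $\mathcal{P}_n$, so no genuine obstruction arises there.
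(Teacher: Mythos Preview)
The paper does not prove this lemma at all; it is quoted as a known fact with a citation to Nielsen and Chuang, and then used as a black box. Your sketch via the symplectic representation over $\mathbb{F}_2$ is the standard textbook argument (essentially what the cited reference does) and is correct as outlined. One small caveat you already flagged: under the paper's definition, $\mathcal{C}_n$ is the full normalizer of $\mathcal{P}_n$ in $U(2^n)$ and therefore contains every global phase $e^{i\theta}I$, so the listed gates generate $\mathcal{C}_n$ only modulo $U(1)$. This is the universally understood convention and the paper, like most sources, tacitly works up to global phase.
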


According to Definition \ref{def:Clifford_map}, conjugation by any Clifford unitary \( U \in \mathcal{C}_n \) maps elements of the Pauli group \( \mathcal{P}_n \) to other elements within \( \mathcal{P}_n \). Lemma \ref{lemma:clifford} establishes that any such \( U \) can be implemented using Clifford gates. To classically compute \( U P U^\dagger \) for \( P \in \mathcal{P}_n \), one can proceed as follows:

\label{explanation:clifford_pauli_update}

\begin{enumerate}
    \item Represent \( P \) as a tensor product of single-qubit Pauli operators and a complex phase.
    \item Decompose \( U \) into a sequence of Clifford gates (e.g., \( H \), \( S \), and CNOT gates).
    \item Sequentially update the representation of \( P \) by applying the conjugation action of each gate in the sequence. Since the conjugation of a Pauli operator by a Clifford gate results in another Pauli operator, the updated representation remains within \( \mathcal{P}_n \).
\end{enumerate}

The computational complexity depends on the number of gates \( m \) in the implementation of \( U \):

\begin{itemize}
    \item If \( m = O(n) \), the computation can be performed in \( O(n) \) time.
    \item Otherwise, the computation requires \( O(m) \) time.
\end{itemize}

\begin{definition}
A quantum circuit limited to Clifford gates and classical operations is one that includes only the following gates:
\begin{enumerate}
    \item Clifford gates (CNOT, H, S). 
    \item Any classical logical operations, which can only be performed on qubits immediately after they are measured. These operations are not necessarily invertible.   
\end{enumerate}
\end{definition}

\textbf{Definition ($\cc{MIP}$):} The complexity class $\cc{MIP}$ consists of all languages $L$ for which there exists a probabilistic polynomial-time verifier $V$ that interacts with multiple computationally unbounded but non-communicating classical provers $P_1, P_2, \dots, P_\K$, that are allowed to share random bits before the interaction,  such that there exist functions \( c, s : \mathbb{N} \rightarrow [0,1] \) satisfying \( c(|x|) - s(|x|) \geq \frac{1}{\operatorname{poly}(|x|)} \), and:

\begin{itemize} \item (\textbf{Completeness}) If $x \in L$, then there exist provers that convince the verifier to accept with a probability at least $c(|x|)$. \item (\textbf{Soundness}) If $x \notin L$, then any provers can convince the verifier to accept except with a probability at most $s(|x|)$ \end{itemize}

\vspace{1em}

\textbf{Definition ($\cc{MIP}^\ast$):} The complexity class $\cc{MIP}^\ast$ consists of all languages $L$ for which there exists a probabilistic polynomial-time verifier $V$ that interacts with multiple computationally unbounded but non-communicating quantum provers $P_1, P_2, \dots, P_\K$ that are allowed to share entangled qubits prior to the interaction, such that there exist functions \( c, s : \mathbb{N} \rightarrow [0,1] \) satisfying \( c(|x|) - s(|x|) \geq \frac{1}{\operatorname{poly}(|x|)} \), and::

\begin{itemize} \item (\textbf{Completeness}) If $x \in L$, then there exist provers that convince the verifier to accept with a probability at least $c(|x|)$. \item (\textbf{Soundness}) If $x \notin L$, then any provers can convince the verifier to accept except with a probability at most $s(|x|)$ \end{itemize}

We define a new complexity class, denoted \CliffordMIP, as follows:

\vspace{1em}

\textbf{Definition (\CliffordMIP):} The complexity class \CliffordMIP consists of all languages $L$ for which there exists a probabilistic polynomial-time verifier $V$ that interacts with multiple computationally unbounded but non-communicating quantum provers $P_1, P_2, \dots, P_\K$ that are allowed to share entangled qubits prior to the interaction, and are constrained to applying only Clifford operations in addition to classical post-processing of measurement results, such that for some parameters $c(|x|) > s(|x|) + \frac{1}{\operatorname{poly}(|x|)}$:

\begin{itemize} \item (\textbf{Completeness}) If $x \in L$, then there exist provers that convince the verifier to accept with a probability at least $c(|x|)$. \item (\textbf{Soundness}) If $x \notin L$, then any provers can convince the verifier to accept except with a probability at most $s(|x|)$. \end{itemize}

\section{Clifford Strategies in Interactive Protocols}

\subsection{Classical Simulation of Clifford Strategies}

We state the main theorem of the current section.

\begin{theorem}
\label{thm: multi_round_clifford_classical}
Let there be an interactive protocol with \( \K \) provers and \( \R \) rounds \newline $G = \left(Q , A ,\ \left\{ \pi_r \right\}_{r=1}^\R,\ v \right)$, and a quantum strategy $S =  \left( \ket{\psi},\ \left(\left( M_{i}^r \right)_{i=1}^\K \right)_{r=1}^\R\right)$ for $G$. Furthermore, there exists an implementation of $S$ denoted $C = \left(\left( C_i^r \right)_{i=1}^\K\right)_{r=1}^\R$ such that $\forall i\in[\K], r\in[\R]$, $C_i^r$ is a quantum circuit limited to Clifford gates and classical post-processing of measurement results. Then, there exists a classical strategy $\tilde{S}$ for $G$ that meets the following conditions:

\begin{enumerate}
    \item $\tilde{S}$ is equivalent to $S$.
    \item The computation performed by the provers executing $\tilde{S}$ throughout the interaction is efficient in terms of the circuit size of the provers that implement $S$ and the number of qubits they use. However, computing the random bits that are shared between the provers before the interaction is only known to be efficient in terms of the same parameters if $\ket{\psi}$ is a stabilizer state. 
\end{enumerate}

\end{theorem}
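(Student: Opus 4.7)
The plan is to build $\tilde{S}$ by the recipe sketched in Section~\ref{sec:proof_outline}: hard-code an arbitrary reference sequence of questions $q = \left( \left( q_i^r \right)_{i=1}^\K \right)_{r=1}^\R$ into the strategy; before the interaction, have the provers jointly sample a full precomputed transcript of answers $a$ from the distribution $\Pr_S[\cdot \mid q]$ and share (an encoding of) this transcript as common randomness; during the live interaction, have each prover $i$ locally compute a Pauli ``correction'' mapping its precomputed answer $a_i^r$ to the answer $\tilde{a}_i^r$ it actually returns when the received question is $\tilde{q}_i^r$. Everything then reduces to showing that these corrections can be computed locally, are well-defined Pauli operators, and induce exactly the distribution $\Pr_S$ on transcripts.

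First I would normalize the implementation: for each round $r$ and prover $i$, the principle of deferred measurement lets me assume that $C_i^r$ loads the received question $\tilde{q}_i^r$ into a question register, applies a Clifford unitary $U_i^r$, and measures a designated answer register in the computational basis. I would then define
\[
R_i^r = U_i^r \left( X_{\tilde{q}_i^r \oplus q_i^r} \otimes R_i^{r-1} \right) U_i^{r\dagger}, \qquad R_i^0 = I,
\]
and prove by induction on $r$ that each $R_i^r$ is a Pauli operator: the inner tensor factor is a product of an $X$-string with a Pauli (inductive hypothesis), and conjugation by the Clifford $U_i^r$ preserves the Pauli group by Definition~\ref{def:Clifford_map}. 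The rearrangement
\[
U_i^r \left( X_{\tilde{q}_i^r \oplus q_i^r} \otimes R_i^{r-1} \right) = R_i^r U_i^r
\]
is the algebraic engine of the proof: it lets prover $i$ replace ``prepare $\tilde{q}_i^r$ and apply $U_i^r$'' by ``prepare $q_i^r$, apply $U_i^r$, then multiply by $R_i^r$''. Since $R_i^r$ depends only on the local quantities $\{U_i^{r'}, q_i^{r'}, \tilde{q}_i^{r'}\}_{r' \leq r}$, the resulting strategy respects the locality constraint of $G$.

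With these operators in hand I would establish, by induction on $r$, the global state invariant
\[
\rho_{\tilde{q},\tilde{a},r} = \left( \bigotimes_{i=1}^\K R_i^r \right) \rho_{q,a,r} \left( \bigotimes_{i=1}^\K R_i^r \right)^\dagger,
\]
where $\rho_{q,a,r}$ and $\rho_{\tilde{q},\tilde{a},r}$ are the (unnormalized) post-measurement joint states after round $r$ in the hard-coded and live branches. The base case is trivial. For the inductive step I would apply $\bigotimes_i U_i^{r+1}$ to both sides, use the key identity to pull the round-$r$ corrections through and absorb the $X_{\tilde{q}_i^{r+1} \oplus q_i^{r+1}}$ factors, then project onto the outcomes, matching projectors via $\ket{\tilde{a}_i^{r+1}}\bra{\tilde{a}_i^{r+1}} \otimes I = R_i^{r+1} \left( \ket{a_i^{r+1}}\bra{a_i^{r+1}} \otimes I \right) R_i^{r+1\dagger}$, which holds by the very definition of $\tilde{a}_i^{r+1}$ in $\tilde{S}$. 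Taking traces gives the conditional identity
\[
\Pr_S\!\left[\tilde{a}^r \mid \tilde{q}^1, \tilde{a}^1, \dots, \tilde{q}^{r-1}, \tilde{a}^{r-1}, \tilde{q}^r\right] = \Pr_S\!\left[a^r \mid q^1, a^1, \dots, q^{r-1}, a^{r-1}, q^r\right],
\]
and chaining this across rounds together with the verifier's fixed distribution $\pi_r$ yields $\Pr_{\tilde{S}} = \Pr_S$ on every transcript.

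The main obstacle I expect is precisely the bookkeeping in this inductive step: in the two executions the measurement outcomes genuinely differ ($a^{r+1}$ vs.\ $\tilde{a}^{r+1}$), and one must verify that the two unnormalized post-measurement states are related by exactly the outer Pauli conjugation, so that their traces — the conditional probabilities — coincide. Tracking the phases picked up when Clifford-conjugating Paulis, and confirming they cancel consistently between the bra and the ket of each answer projector, is the delicate point; everything else is routine Heisenberg-picture manipulation. For efficiency, the per-round work reduces to updating a Pauli representation under Clifford conjugation and applying a Pauli to a computational-basis state, both polynomial in the implementation size by Lemma~\ref{lemma:clifford} and the discussion following Definition~\ref{def:Clifford_map}. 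The only step whose classical efficiency is unclear in general is the initial joint sampling of $a$ from $\Pr_S[\cdot \mid q]$, which by the Gottesman--Knill theorem becomes efficient exactly when $\ket{\psi}$ is a stabilizer state.
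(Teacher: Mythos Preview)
Your core argument---precompute a transcript for hard-coded questions, track a Pauli correction $R_i^r$ via conjugation by the Clifford $U_i^r$, establish the state invariant~\eqref{eq:state_link_introduction} by induction, and read off the conditional-probability identity---is exactly the content of the paper's Lemma~\ref{lemma: multi_round_clifford_unitary_classical}. The bookkeeping you flag as the ``main obstacle'' (phases cancelling between bra and ket of each answer projector) is handled in the paper just as you anticipate, so there is no real difficulty there.

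Where your route diverges is in the reduction from the general statement of Theorem~\ref{thm: multi_round_clifford_classical} (circuits with intermediate measurements and arbitrary classical post-processing) to the clean ``Clifford unitary then measure'' form. You invoke deferred measurement; the paper instead proves the clean case as a separate lemma and then reduces to it by a delegation trick: build an auxiliary protocol $G_S$ with extra rounds in which each prover sends its raw measurement outcome to the verifier, the verifier performs the classical post-processing and sends the result back, and the prover resumes. This makes every prover circuit in $G_S$ literally ``Clifford unitary then measure,'' so Lemma~\ref{lemma: multi_round_clifford_unitary_classical} applies, and un-delegating yields $\tilde{S}$ for $G$.

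Your deferred-measurement shortcut is morally fine but glosses over one point: deferring intermediate measurements works here precisely because the model forbids classically-controlled Clifford gates (so measured bits never steer later quantum operations and measurements commute freely past subsequent Cliffords), but deferral does \emph{not} absorb the final classical post-processing into $U_i^r$. After normalization the answer is $f_i^r(m_i^r)$ for a fixed classical function $f_i^r$ applied to the raw outcome $m_i^r$, not the outcome of an ``answer register'' directly; absorbing $f_i^r$ coherently would in general require non-Clifford gates. You should say explicitly that the classical prover Pauli-corrects the \emph{raw} measurement string and then applies the same $f_i^r$---which is harmless since the raw-outcome distributions already match. The paper's delegation trick makes this step explicit at the cost of an auxiliary protocol; your approach is more direct once this point is stated.
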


To simplify the proof of Theorem \ref{thm: multi_round_clifford_classical}, we begin by stating and proving a weaker theorem. While Theorem \ref{thm: multi_round_clifford_classical} states that any Clifford strategy for an interactive protocol can be classically simulated, Lemma \ref{lemma: multi_round_clifford_unitary_classical} states that for a subset of Clifford strategies. This is the subset of strategies such that the computation performed by any prover at each round of interaction can be implemented with a Clifford unitary followed by a standard basis measurement of some of the qubits on which the unitary operates.  

\begin{lemma}
\label{lemma: multi_round_clifford_unitary_classical}
Let there be an interactive protocol with \( K \) provers and \( R \) rounds \newline $G = \left( Q, A, \left\{ \pi_r \right\}_{r=1}^\R, v \right)$ with $K$ provers, and a quantum strategy $S =  \left( \ket{\psi},\ \left(\left( M_{i}^r \right)_{i=1}^\K \right)_{r=1}^\R\right)$ for $G$. Furthermore, there exists an implementation of $S$ denoted by $C = \left(\left( C_i^r \right)_{i=1}^\K\right)_{r=1}^\R$ such that $\forall i\in[\K], r\in[\R]$, $C_i^r$ is a quantum circuit composed of a Clifford unitary followed by a standard basis measurement of a subset of the qubits on which it operates. Then, there exists a classical strategy $\tilde{S}$ for $G$ that meets the following conditions:

\begin{enumerate}
    \item $\tilde{S}$ is equivalent to $S$.
    \item The computation performed by the provers executing $\tilde{S}$ throughout the interaction is efficient in terms of the circuit size of the provers that implement $S$ and the number of qubits they use. However, computing the random bits that are shared between the provers before the interaction is only known to be efficient in terms of the same parameters if $\ket{\psi}$ is a stabilizer state. 
\end{enumerate}

\end{lemma}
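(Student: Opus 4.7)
The plan is to realize $\tilde{S}$ exactly as described in the proof outline. Concretely, I fix an arbitrary sequence of hard-coded queries $q = (q_i^r)_{i\in[\K], r\in[\R]}$, and use the shared-randomness distribution $P$ to jointly sample a full precomputed transcript $(q, a)$ from $\Pr_S[\cdot \mid q]$, the distribution induced by the quantum strategy on those hard-coded queries. The shared randomness thus encodes, for each prover $i$, the sequence $(q_i^r, a_i^r)_{r=1}^\R$. During the interaction, each $f_i^r$ takes the received query $\tilde{q}_i^r$, the local history so far, and the shared transcript, and computes the Pauli correction $R_i^r$ via the recursion $R_i^r = U_i^r\bigl(X_{\tilde{q}_i^r \oplus q_i^r} \otimes R_i^{r-1}\bigr) U_i^{r\dagger}$ with $R_i^0 = I$; it then outputs the unique $\tilde{a}_i^r$ satisfying $\ket{\tilde{a}_i^r}\bra{\tilde{a}_i^r}\otimes I = R_i^r(\ket{a_i^r}\bra{a_i^r}\otimes I)R_i^{r\dagger}$. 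By Definition~\ref{def:Clifford_map} and a routine induction, each $R_i^r$ is Pauli; since Paulis permute computational basis states up to a global phase, $\tilde{a}_i^r$ is well-defined; and crucially, each prover's recursion uses only its own Clifford unitaries and its own local data, so no communication is needed.

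The heart of the proof is to show $\tilde{S} \equiv S$ by establishing the invariance $\rho_{\tilde{q},\tilde{a},r} = \bigl(\bigotimes_i R_i^r\bigr)\rho_{q,a,r}\bigl(\bigotimes_i R_i^r\bigr)^\dagger$ of Eq.~\eqref{eq:state_link_introduction}, by induction on $r$. The base case $r = 0$ is immediate since $R_i^0 = I$ and both states equal $\ket{\psi}\bra{\psi}$. For the inductive step I apply $\bigotimes_i U_i^r$ to the shared state with the verifier's $r$-th queries loaded into the question registers, use the algebraic identity $U_i^r(X_{\tilde{q}_i^r \oplus q_i^r} \otimes R_i^{r-1}) = R_i^r U_i^r$ of Eq.~\eqref{eq:commutation} to push the hard-coded query and the previous correction past the unitary, and then perform the standard-basis measurement on the answer register. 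The relation $\ket{\tilde{a}_i^r}\bra{\tilde{a}_i^r} = R_i^r\ket{a_i^r}\bra{a_i^r}R_i^{r\dagger}$ on the measured qubits then shows simultaneously that the classical outcome $\tilde{a}_i^r$ occurs under $S$ with the same probability that $S$ assigns to the precomputed $a_i^r$, and that the post-measurement state still satisfies the desired relation with $R_i^r$ taking the role of $R_i^{r-1}$ for the next round's recursion.

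Combining the two conclusions of the inductive step yields the multi-round answer-distribution equality $\Pr_S[a^r \mid q^1,a^1,\ldots,q^r] = \Pr_S[\tilde{a}^r \mid \tilde{q}^1,\tilde{a}^1,\ldots,\tilde{q}^r]$ of Eq.~\eqref{eq:multi_round_replace}. Multiplying across rounds by the verifier's conditional question distribution $\pi_r(\cdot,\cdot)$, which is identical under both strategies, telescopes to $\Pr_{\tilde{S}}[\tilde{h}] = \Pr_S[\tilde{h}]$ for every full transcript $\tilde{h}$, which is precisely the definition of equivalence. The main obstacle I anticipate is carefully bookkeeping the tensor-factor structure of $R_i^r$: since measurements touch only a subset of each prover's qubits, I must verify that the part of $R_i^r$ on unmeasured qubits propagates correctly into the next round, while the part on the measurement register contributes only the classical bit-flip relating $a_i^r$ to $\tilde{a}_i^r$, with any $Z$ or $Y$ factors on measured qubits collapsing to scalar phases that cancel between the two transcripts.

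For the efficiency claim, the Clifford-Pauli propagation rule described after Lemma~\ref{lemma:clifford} shows that each $R_i^r$ admits a polynomial-size Pauli description, the conjugation $U_i^r(\cdot)U_i^{r\dagger}$ runs in time polynomial in the gate count of $C_i^r$, and recovering $\tilde{a}_i^r$ from $R_i^r$ and $a_i^r$ is a bit-level $X$-part lookup. The only step that is not generally efficient is the pre-interaction sampling of $a$ from the Born distribution induced by $C$ on hard-coded queries $q$; by Gottesman-Knill this sampling is classically efficient precisely when $\ket{\psi}$ is a stabilizer state, which matches the efficiency clause of the lemma.
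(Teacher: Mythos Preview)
Your proposal is correct and follows essentially the same approach as the paper's own proof: you hard-code a query sequence, pre-sample a transcript from $\Pr_S$ as shared randomness, define the same Pauli recursion $R_i^r = U_i^r(X_{\tilde q_i^r\oplus q_i^r}\otimes R_i^{r-1})U_i^{r\dagger}$, prove the state invariance \eqref{eq:state_link_introduction} by induction on $r$, derive the per-round probability equality \eqref{eq:multi_round_replace}, and telescope over rounds to obtain equivalence; the efficiency discussion also matches. The only cosmetic difference is that the paper takes $r=1$ as the induction base with an explicit computation, whereas you start at $r=0$ with $R_i^0=I$, which is slightly cleaner but equivalent.
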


Following is the proof of Lemma \ref{lemma: multi_round_clifford_unitary_classical}.

\begin{proof}

We begin by analyzing the strategy $S$.\newline 

\textbf{Analysis of the Quantum Strategy $S$}\newline 

Let $(P_1, ..., P_K)$ be provers that execute $S$ with the implementation $C$. At the beginning of the game, before the verifier sends its questions, the state of all qubits held by the provers is $\rho := \ket{\psi}\bra{\psi}$. \newline 

For any $i\in[\K], r\in[\R]$, the circuit $C_{i}^r$ operates on the input bit-string received by $P_i$ from the verifier in the $r^{th}$ round of the interaction, and on $P_i$'s part of the shared state at that point. $C_i^r$ is composed of a Clifford unitary $U_{i}^r$ and followed by a standard basis measurement of a subset of the qubits held by the $i^{th}$ prover in the $r^{th}$ round, denoted by $\Lambda_{i}^r$ . $C_i^r$ is equivalent to the application of the elements in the above sequence in chronological order going from left to right.\newline 

For convenience and without loss of generality, we make several assumptions:

\begin{enumerate}
    \item $\exists s,t\in \mathbb{N}: Q\subseteq \{0,1\}^s, A\subseteq \{0,1\}^t$.
    \item Each prover returns the result of the classical computation it performs at the end of each round as its answer to the verifier. We denoted this result by $a_i^r$ for the $i^{th}$ prover in the $r^{th}$ round. 
    \item At each round, any prover stores its received query in qubits that are each initialized to \( \ket{0} \) and are not used by the prover up to that point. This implies that \( U_{i}^{r+1} \) may operate on more qubits than \( U_{i}^r \) does. We denote by \( d_{i}^r \) the number of qubits on which \( U_{i}^r \) operates.
    \item As the last step performed by the $i^{th}$ prover in each round, it measures $t$ of its qubits and returns the measurement result as its answer to the verifier.
\end{enumerate}

\textbf{Definition of a Useful Operator} \newline 

Next, let there be $n\in \mathbb{N}$ and two bit-strings $x,y\in\{0,1\}^n$. We define the operator $X_{x\oplus y}$ as follows:
\[
X_{x\oplus y} = \bigotimes_{i=1}^{n} X^{x_i\oplus y_i}
\]
where $X$ is the Pauli-$X$ operator and $X^0=I$, the identity operator of size $2\times 2$. $\oplus$ is the bitwise XOR operation.
By definition of $X_{x\oplus y}$, the following equality holds:
\[
\ket{x} = X_{x\oplus y} \ket{y}
\]

\textbf{An Equivalent Classical Strategy}\newline 

We are now ready to discuss Algorithm~\ref{alg:classical_simulation_multi_round}, which describes a classical strategy for $G$ that is equivalent to $S$.

\begin{algorithm}[htbp]
\caption{Classical Simulation of a Clifford Strategy}
\label{alg:classical_simulation_multi_round}
\begin{algorithmic}[1]
\STATE 
An arbitrary sequence of questions for all rounds, $q = (q^1, \dots, q^R)$, is hard-coded into the algorithm. The provers sample answers $a^r\sim \Pr_S\left(a^r \mid q^{\leq r}, a^{<r}\right)$, starting from $r=1$ and ascending until $r=R$. Here, $q^{\leq r} = (q^1, \dots, q^r)$ and $a^{< r} = (a^1, \dots, a^{r-1})$, where $q^r: =\left(q_{i}^r\right)_{i=1}^\K$ and $a^r: =\left(a_{i}^r\right)_{i=1}^\K$. $q_{i}^r$ and $a_i^r$ are the question and answer received and sent by the $i^{th}$ prover in the $r^{th}$ round of the sampled interaction. The random bits shared by the provers encode the interaction $h = (q^r, a^r)_{r=1}^\R$. 

\STATE 
Let us denote the interaction of the provers with the verifier by $\tilde{h} = (\tilde{q}^r, \tilde{a}^r)_{r=1}^\R$. At each round $r = 1$ to $R$, each prover $i$ performs the following steps:

\begin{enumerate}[label=\alph*)]
    \item Computes the matrix $X_{q_i^r \oplus \tilde{q}_i^r}$.
    \item Computes the operator $R_{i}^r := U_{i}^r \left( X_{q_i^r\oplus \tilde{q}_i^r} \otimes R_{i}^{r-1}\right) U_{i}^{r\dagger}$.
    \item Computes the bit-string $\tilde{a}_i^r$ such that $\ket{\tilde{a}_{i}^r}\bra{\tilde{a}_{i}^r}\otimes \hat{I}_i^r = R_{i}^r \left( \ket{a_{i}^r}\bra{a_{i}^r} \otimes \hat{I}_{i}^r \right) R_{i}^{r\dagger}$.
    \item Returns $\tilde{a}_i^r$ as an answer to the verifier.
\end{enumerate}

where $R_i^0:=I_{i}$ is the identity operator which acts on the same qubits on which $U_i^1$ acts, excluding those that store the question $\tilde{q}_i^1$. $\hat{I}_{i}^r$ is the identity operator which acts on the same qubits on which $R_{i}^r$ does, excluding those that the $i^{th}$ prover measures to obtain $a_{i}^r$.

\end{algorithmic}
\end{algorithm}

 For a strategy $S$, $\Pr_{S}:H^R\rightarrow [0,1]$ denotes the probability mass function marking the distribution of questions and answers in an interaction between provers that execute $S$ and a verifier which follows the protocol $G$. We denote by $\tilde{S}$ the strategy of provers executing Algorithm \ref{alg:classical_simulation_multi_round} with respect to a quantum strategy $S$. We prove that $S,\tilde{S}$ are equivalent, a condition expressed by the following equality:

\begin{equation}
\begin{aligned}
\forall h \in H^R: \Pr_{\tilde{S}}[h]= \Pr_{S} [h]
\end{aligned}
\end{equation}

Let us begin by proving the following claim regarding the operators $\left(\left(R_i^r\right)_{i=1}^\K\right)_{r=1}^\R$ that are calculated by the provers throughout the execution of the algorithm:

\[\forall i\in[\K], r\in[\R]:R_{i}^r\in \mathcal{P}_{d_{i}^r}\]

where ${\mathcal{P}}_{d_{i}^r}$ is the Pauli group over $d_{i}^r$ qubits—the number of qubits on which $U_i^r$ operates. As mentioned above, $U_{i}^r$ is a unitary composed of Clifford gates. Hence $U_{i}^r\in \mathcal{C}_{d_{i}^r}$, the \( d_{i}^r \)-qubit Clifford group. Furthermore, $X_{q_i^r \oplus \tilde{q}_i^r}\in \mathcal{P}_{s}$, the \( s \)-qubit Pauli group.\newline 

According to the definition of the Clifford group, for any $n\in \mathbb{N}$, a matrix in $\mathcal{C}_{n}$ maps any matrix in the Pauli group $P_{n}$ to another matrix in that group by conjugation. It follows that $\forall i\in[\K]: R_{i}^1\in \mathcal{P}_{d_{i}^1}$. This serves as the basis for an inductive argument that proves the above claim. \newline 

Let there be $r\in [R-1], i\in[\K]$. We assume that $R_{i}^r\in P_{d_{i}^r}$. Hence, $\xi:=X_{q_i^{r+1} \oplus \tilde{q}_i^{r+1}}\otimes R_{i}^{r}\in \mathcal{P}_{d_{i}^{r+1}}$. Therefore, $R_{i}^{r+1}=U_{i}^{r+1}\xi(U_{i}^{r+1})^\dagger \in \mathcal{P}_{d_{i}^{r+1}}$. This completes the inductive argument and proves the claim. \newline 

Next, consider a possible interaction $h = \left( \left( q_i^r, a_i^r \right)_{i=1}^\K \right)_{r=1}^\R$ between a verifier $V$ which follows protocol $G$ and provers $(P_1, \dots, P_K)$ that execute strategy $S$. Let there be another possible series of queries 
\[\tilde{q} = \left( \left( \tilde{q}_i^r \right)_{i=1}^\K \right)_{r=1}^\R\]

which correspond to a full interaction of $V$ with provers. We analyze an interaction between $V$ and classical provers executing Algorithm \ref{alg:classical_simulation_multi_round}, assuming their shared randomness encodes the interaction $h$, while the questions presented to them by the verifier throughout the interaction are $\tilde{q}$. In this analysis, we refer to the values computed during the interaction as $\left( \left( \tilde{a}_{i}^r \right)_{i=1}^\K \right)_{r=1}^\R$.

Hence, we may denote by $\tilde{h}:=\left( \left( \tilde{q}_i^r, \tilde{a}_i^r \right)_{i=1}^\K \right)_{r=1}^\R$ the interaction between the provers which execute Algorithm \ref{alg:classical_simulation_multi_round} when their shared randomness encodes the interaction $h$. We denote the operators that are computed by the algorithm during the interaction and which depend on the questions $q,\tilde{q}$, by
\[
\left( \left( R_{i}^r \right)_{i=1}^\K\right)_{r=1}^\R,
\]

For all $r\in[\R]$, denote by $\rho_{q,a,r}$ the post-measurement state of all qubits held by the $\K$ provers executing the strategy $S$ right after returning the answers of the $r^{th}$ round to the verifier, assuming the following conditions hold:

\begin{itemize}
    \item For each $j \in [r]$, the provers receive the questions $q^j=\left(q_i^j\right)_{i=1}^\K$ from the verifier in the $j^{th}$ round.
    \item For each $j \in [r]$, the final answers returned by the provers in the $j^{th}$ round are $a^j = \left( a_i^j \right)_{i=1}^\K$.   
\end{itemize}

We begin by examining $\rho_{\tilde{q}, \tilde{a}, 1}$:
\begin{align*}
\rho_{\tilde{q}, \tilde{a}, 1} &= \left( \bigotimes_{i=1}^\K \left( \ket{\tilde{a}_{i}^1}\bra{\tilde{a}_{i}^1} \otimes \hat{I}_{i}^1 \right) \right) \left( \bigotimes_{i=1}^\K U_{i}^1 \right) \left(( \bigotimes_{i=1}^\K \ket{\tilde{q}_i^1}\bra{\tilde{q}_i^1})\otimes\rho \right)  \left( \bigotimes_{i=1}^\K U_{i}^{1\dagger} \right) \left( \bigotimes_{i=1}^\K \left( \ket{\tilde{a}_{i}^1}\bra{\tilde{a}_{i}^1} \otimes \hat{I}_{i}^1 \right) \right)
\end{align*}

Observe the following equality:

\[
\ket{\tilde{a}_{i}^1}\bra{\tilde{a}_{i}^1}\otimes \hat{I}_{i}^1 =  R_{i}^1 \left( \ket{a_{i}^1}\bra{a_{i}^1} \otimes \hat{I}_{i}^1\right)(R_{i}^{1})^\dagger,
\]

Using it, we can express $\rho_{\tilde{q}, \tilde{a}, 1}$ as:
\begin{align*}
\rho_{\tilde{q}, \tilde{a}, 1} &= \left( \bigotimes_{i=1}^\K R_{i}^1 \left( \ket{a_{i}^1}\bra{a_{i}^1} \otimes \hat{I}_{i}^1 \right) R_{i}^{1\dagger} \right) \left( \bigotimes_{i=1}^\K U_{i}^1 \right)\left(( \bigotimes_{i=1}^\K \ket{\tilde{q}_i^1}\bra{\tilde{q}_i^1})\otimes\rho \right) \\
&\quad\left( \bigotimes_{i=1}^\K U_{i}^{1\dagger} \right)  \left( \bigotimes_{i=1}^\K R_{i}^1 \left( \ket{a_{i}^1}\bra{a_{i}^1} \otimes \hat{I}_{i}^1 \right) (R_{i}^{1})^\dagger \right) 
\end{align*}

Since $(R_{i}^1)^\dagger=R_{i}^1$, then $(R_{i}^1)^\dagger U_{i}^1= (R_{i}^1) U_{i}^1= U_{i}^1(X_{q_i^1 \oplus \tilde{q}_i^1}\otimes I_{i}^1) (U_{i}^1)^\dagger U_{i}^1=U_{i}^1(X_{q_i^1 \oplus \tilde{q}_i^1}\otimes I_{i}^1)$. As a result, 

\begin{align*}
\rho_{\tilde{q}, \tilde{a}, 1} &= \left( \bigotimes_{i=1}^\K R_{i}^1 \left( \ket{a_{i}^1}\bra{a_{i}^1} \otimes \hat{I}_{i}^1 \right)U_{i}^1(X_{q_i^1 \oplus \tilde{q}_i^1}\otimes I_{i}^1)\right)\left(( \bigotimes_{i=1}^\K \ket{\tilde{q}_i^1}\bra{\tilde{q}_i^1})\otimes\rho \right) \\
&\quad\left( \bigotimes_{i=1}^\K R_{i}^1 \left( \ket{a_{i}^1}\bra{a_{i}^1} \otimes \hat{I}_{i}^1 \right)U_{i}^1(X_{q_i^1 \oplus \tilde{q}_i^1}\otimes I_{i}^1)\right)^\dagger
\end{align*}

$X_{q_i^1 \oplus \tilde{q}_i^1}\ket{\tilde{q}_i^1}\bra{\tilde{q}_i^1}X_{q_i^1 \oplus \tilde{q}_i^1}=\ket{q_i^1}\bra{q_i^1}$, and therefore:
\begin{align*}
\rho_{\tilde{q}, \tilde{a}, 1} &= \left( \bigotimes_{i=1}^\K R_{i}^1\right) \left( \bigotimes_{i=1}^\K\ket{a_{i}^1}\bra{a_{i}^1} \otimes \hat{I}_{i}^1 \right) \left(\bigotimes_{i=1}^\K U_{i}^1\right)\left(( \bigotimes_{i=1}^\K \ket{q_i^1}\bra{q_i^1})\otimes\rho \right) \\
&\quad \left(\bigotimes_{i=1}^\K U_{i}^1\right)^\dagger \left( \bigotimes_{i=1}^\K\ket{a_{i}^1}\bra{a_{i}^1} \otimes \hat{I}_{i}^1 \right) \left( \bigotimes_{i=1}^\K R_{i}^1\right)^\dagger = \left( \bigotimes_{i=1}^\K R_{i}^1\right)\rho_{q, a, 1}\left( \bigotimes_{i=1}^\K R_{i}^1\right)^\dagger 
\end{align*}

This serves as the basis for an inductive argument over \( r \) proving the following statement:

\begin{equation}
\label{eq:state_link}
\forall r\in[\R]: \quad \rho_{\tilde{q}, \tilde{a}, r} = \left( \bigotimes_{i=1}^\K R_{i}^r \right) \rho_{q, a, r} \left( \bigotimes_{i=1}^\K R_{i}^r \right)^\dagger
\end{equation}

Let \( 2\leq r\leq R \). As the induction hypothesis, assume that the following holds:

\begin{equation}
\label{eq:state_link_induction}
\quad \rho_{\tilde{q}, \tilde{a}, r-1} = \left( \bigotimes_{i=1}^\K R_{i}^{r-1} \right) \rho_{q, a, r-1} \left( \bigotimes_{i=1}^\K R_{i}^{r-1} \right)^\dagger
\end{equation}

We observe the following equality:

\begin{align}
\rho_{\tilde{q}, \tilde{a}, r} &= \left( \bigotimes_{i=1}^\K \left( \ket{\tilde{a}_{i}^r}\bra{\tilde{a}_{i}^r} \otimes \hat{I}_{i}^r \right) \right) \left( \bigotimes_{i=1}^\K U_{i}^r \right) \left( \left( \bigotimes_{i=1}^\K \ket{\tilde{q}_i^r}\bra{\tilde{q}_i^r} \right) \otimes \rho_{\tilde{q}, \tilde{a}, r-1} \right) \left( \bigotimes_{i=1}^\K U_{i}^r \right)^\dagger \notag \\
&\quad \left( \bigotimes_{i=1}^\K \left( \ket{\tilde{a}_{i}^r}\bra{\tilde{a}_{i}^r} \otimes \hat{I}_{i}^r \right) \right)
\end{align}

Using the induction hypothesis, the relation \( (\ket{\tilde{a}_{i}^r}\bra{\tilde{a}_{i}^r} \otimes \hat{I}_{i}^r) = R_{i}^r (\ket{a_{i}^r}\bra{a_{i}^r} \otimes \hat{I}_{i}^r))R_{i}^{r\dagger} \) and the fact \( (R_{i}^r)^\dagger = R_{i}^r \), we obtain the following equality:

\begin{align}
\rho_{\tilde{q}, \tilde{a}, r} &= \left( \bigotimes_{i=1}^\K R_{i}^r \left( \ket{a_{i}^r}\bra{a_{i}^r} \otimes \hat{I}_{i}^r \right) R_{i}^r \right)\notag \\
&\quad  \left( \bigotimes_{i=1}^\K U_{i}^r \right) \left( \left( \bigotimes_{i=1}^\K \ket{\tilde{q}_i^r}\bra{\tilde{q}_i^r} \right) \otimes \left(\left( \bigotimes_{i=1}^\K R_{i}^{r-1} \right) \rho_{q,a,r-1} \left( \bigotimes_{i=1}^\K R_{i}^{r-1} \right)^\dagger \right)\right) \notag \\
&\quad \left( \bigotimes_{i=1}^\K U_{i}^r \right)^\dagger \left( \bigotimes_{i=1}^\K R_{i}^r \left( \ket{a_{i}^r}\bra{a_{i}^r} \otimes \hat{I}_{i}^r \right) R_{i}^r \right)
\end{align}

Since $R_{i}^r=U_{i}^{r} (X_{q_i^r \oplus \tilde{q}_i^r}\otimes R_{i}^{r-1}) (U_{i}^{r})^{\dagger}$, then $R_{i}^rU_{i}^r=U_{i}^r(X_{q_i^r \oplus \tilde{q}_i^r}\otimes R_{i}^{r-1})$. As a result, we have:
\begin{align}
\rho_{\tilde{q}, \tilde{a}, r} &= \left( \bigotimes_{i=1}^\K R_{i}^r \left( \ket{a_{i}^r}\bra{a_{i}^r} \otimes \hat{I}_{i}^r \right) U_{i}^r(X_{q_i^r \oplus \tilde{q}_i^r}\otimes R_{i}^{r-1}) \right) \notag \\
&\left( \left( \bigotimes_{i=1}^\K \ket{\tilde{q}_i^r}\bra{\tilde{q}_i^r} \right) \otimes \left(\left( \bigotimes_{i=1}^\K R_{i}^{r-1} \right) \rho_{q,a,r-1} \left( \bigotimes_{i=1}^\K R_{i}^{r-1} \right)^\dagger \right)\right) \notag \\
&\quad  \left( \bigotimes_{i=1}^\K R_{i}^r \left( \ket{a_{i}^r}\bra{a_{i}^r} \otimes \hat{I}_{i}^r \right) U_{i}^r(X_{q_i^r \oplus \tilde{q}_i^r}\otimes R_{i}^{r-1}) \right)^\dagger
\end{align}

We use the fact that $R_{i}^{r-1}$ is in the Pauli group and the fact that $X_{q_i^r \oplus \tilde{q}_i^r}\ket{\tilde{q}_i^r}=\ket{q_i^r}$, to derive the following result:
\begin{align}
\rho_{\tilde{q}, \tilde{a}, r} &= \left( \bigotimes_{i=1}^\K R_{i}^r \right) \left(\bigotimes_{i=1}^\K \left(\ket{a_{i}^r}\bra{a_{i}^r} \otimes \hat{I}_{i}^r \right)\right) \left( \bigotimes_{i=1}^\K U_{i}^r\right) \left( \left( \bigotimes_{i=1}^\K \ket{q_i^r}\bra{q_i^r} \right) \otimes  \rho_{q,a,r-1} \right)\notag \\
&\quad  \left( \bigotimes_{i=1}^\K U_{i}^r\right)^\dagger \left(\bigotimes_{i=1}^\K \left(\ket{a_{i}^r}\bra{a_{i}^r} \otimes \hat{I}_{i}^r \right)\right) \left( \bigotimes_{i=1}^\K R_{i}^r \right)^\dagger = \left( \bigotimes_{i=1}^\K R_{i}^r \right) \rho_{q,a,r} \left( \bigotimes_{i=1}^\K R_{i}^r \right)^\dagger
\end{align}

This completes the induction step over \( r \). Thus, we have proven statement~\eqref{eq:state_link}. Based on it, we prove the following equality:

\begin{align}
\label{eq:prob_eq}
&\forall r\in[\R]: \Pr_{S}\left[\tilde{a}^r \mid \tilde{h}^{r-1}, \tilde{q}^r\right]=\Pr_{S}\left[a^r\mid h^{r-1}, q^r\right] 
\end{align}

where $h^{r-1}:=(q^1,a^1,...,q^{r-1},a^{r-1})$, $\tilde{h}^{r-1}:=(\tilde{q}^1,\tilde{a}^1,...,\tilde{q}^{r-1},\tilde{a}^{r-1})$. \newline

Let there be $r\in[\R]$. 
\begin{align}
&\Pr_{S}\left[\tilde{a}^r\mid\tilde{h}^{r-1}, \tilde{q}^r\right]=Tr[\left(\ket{\tilde{a}^r}\bra{\tilde{a}^r}\otimes \hat{I}^r\right)\left(\bigotimes_{i=1}^\K U_{i}^r\right)\left(\ket{\tilde{q}^r}\bra{\tilde{q}^r}\otimes\rho_{\tilde{q}, \tilde{a}, r-1}\right)\left(\bigotimes_{i=1}^\K U_{i}^r\right)^{\dagger} ] 
\end{align}

We use the equality $\rho_{\tilde{q}, \tilde{a}, r-1}=\left(\bigotimes_{i=1}^\K R_{i}^{r-1}\right)\rho_{q, a, r-1}\left(\bigotimes_{i=1}^\K R_{i}^{r-1}\right)^\dagger$, to obtain the following result:

\begin{align}
&\Pr_{S}\left[\tilde{a}^r\mid \tilde{h}^{r-1}, \tilde{q}^r\right]=Tr[\left(\ket{\tilde{a}^r}\bra{\tilde{a}^r}\otimes \hat{I}^r\right)\left(\bigotimes_{i=1}^\K U_{i}^r\right)\notag \\
&\left(\ket{\tilde{q}^r}\bra{\tilde{q}^r}\otimes\left(\bigotimes_{i=1}^\K R_{i}^{r-1}\right)\rho_{q, a, r-1}\left(\bigotimes_{i=1}^\K R_{i}^{r-1}\right)^\dagger\right)\left(\bigotimes_{i=1}^\K U_{i}^r\right)^{\dagger} ]
\end{align}

Next, observe that $X_{q_i^r \oplus \tilde{q}_i^r}\ket{\tilde{q}_{i}^r}=\ket{q_{i}^r}$. With that observation in hand, we continue as follows:

\begin{align}
&\Pr_{S}\left[\tilde{a}^r\mid \tilde{h}^{r-1}, \tilde{q}^r\right] = Tr[\left(\ket{\tilde{a}^r}\bra{\tilde{a}^r}\otimes \hat{I}^r\right)\left(\bigotimes_{i=1}^\K U_{i}^r\left( X_{q_i^r \oplus \tilde{q}_i^r}\otimes R_{i}^{r-1}\right)\right)\notag \\
&\left( \ket{q^r}\bra{q^r}\otimes \rho_{q, a, r-1}\right) \left(\bigotimes_{i=1}^\K U_{i}^r\left(X_{q_i^r \oplus \tilde{q}_i^r}\otimes R_{i}^{r-1}\right)\right)^\dagger]
\end{align}

Since $R_{i}^r=U_{i}^r \left( X_{q_i^r \oplus \tilde{q}_i^r}\otimes R_{i}^{r-1}\right)\left(U_{i}^r\right)^\dagger$, then $U_{i}^r\left( X_{q_i^r \oplus \tilde{q}_i^r}\otimes R_{i}^{r-1}\right)=R_{i}^r U_{i}^r$. Then, we derive the following equalities:

\begin{align}
&\Pr_{S}\left[\tilde{a}^r\mid \tilde{h}^{r-1}, \tilde{q}^r\right] \notag \\
&= Tr[\left(\ket{\tilde{a}^r}\bra{\tilde{a}^r}\otimes \hat{I}^r\right)\left(\bigotimes_{i=1}^\K R_{i}^rU_{i}^r\right)\left( \ket{q^r}\bra{q^r}\otimes \rho_{q, a, r-1}\right) \left(\bigotimes_{i=1}^\K R_{i}^rU_{i}^r\right)^\dagger]\notag \\
&= Tr[\left(\ket{\tilde{a}^r}\bra{\tilde{a}^r}\otimes \hat{I}^r\right)\left(\bigotimes_{i=1}^\K R_{i}^r\right)\left(\bigotimes_{i=1}^\K U_{i}^r\right)\left( \ket{q^r}\bra{q^r}\otimes \rho_{q, a, r-1}\right) \left(\bigotimes_{i=1}^\K U_{i}^r\right)^\dagger\left(\bigotimes_{i=1}^\K R_{i}^r\right)^\dagger]\notag \\
&= Tr[\left(\bigotimes_{i=1}^\K R_{i}^r\right)^\dagger\left(\ket{\tilde{a}^r}\bra{\tilde{a}^r}\otimes \hat{I}^r\right)\left(\bigotimes_{i=1}^\K R_{i}^r\right)\left(\bigotimes_{i=1}^\K U_{i}^r\right)\left( \ket{q^r}\bra{q^r}\otimes \rho_{q, a, r-1}\right) \left(\bigotimes_{i=1}^\K U_{i}^r\right)^\dagger] \notag \\
&= Tr[\left(\bigotimes_{i=1}^\K R_{i}^r\right)\left(\ket{\tilde{a}^r}\bra{\tilde{a}^r}\otimes \hat{I}^r\right)\left(\bigotimes_{i=1}^\K R_{i}^r\right)^\dagger\left(\bigotimes_{i=1}^\K U_{i}^r\right)\left( \ket{q^r}\bra{q^r}\otimes \rho_{q, a, r-1}\right) \left(\bigotimes_{i=1}^\K U_{i}^r\right)^\dagger] \notag \\
&= Tr[\left(\ket{a^r}\bra{a^r}\otimes \hat{I}^r\right)\left(\bigotimes_{i=1}^\K U_{i}^r\right)\left( \ket{q^r}\bra{q^r}\otimes \rho_{q, a, r-1}\right)\left(\bigotimes_{i=1}^\K U_{i}^r\right)^\dagger] = \Pr_{S}\left[a^r\mid h^{r-1}, q^r\right]
\end{align}

where in the second-to-last equality, we rely on the fact that 
\[\left(\bigotimes_{i=1}^\K R_{i}^r\right)\left(\ket{\tilde{a}^r}\bra{\tilde{a}^r}\otimes \hat{I}^r\right)\left(\bigotimes_{i=1}^\K R_{i}^r\right)^\dagger=\left(\ket{a^r}\bra{a^r}\otimes \hat{I}^r\right)\] 

Thus, we have proven $\eqref{eq:prob_eq}$. Next, we observe that if the interaction of the provers which execute Algorithm \ref{alg:classical_simulation_multi_round} in the first $r-1$ rounds is $\tilde{h}^{r-1}$, and the questions asked in the $r^{th}$ round by the verifier are $\tilde{q}^r:=(\tilde{q}_1^r,...,\tilde{q}_K^r$), then the answers returned by the $i^{th}$ prover in the $r^{th}$ round are:
\[R_{i}^r(\ket{a_{i}^{r}}\bra{a_{i}^{r}}\otimes I_i^r)(R_{i}^r)^{
\dagger}\]

where $a_{i}^{r}$ is the answer of the $i^{th}$ prover in the $r^{th}$ round of the interaction that is encoded in the random bits shared by the provers. It follows that $\Pr_{\tilde{S}}\left[\tilde{a}^{r} \mid \tilde{h}^{r-1}, \tilde{q}^r\right]=\Pr_{S}\left[ a^r \mid h^{r-1}, q^r\right]$, where $a^r:=(a_1^r, ...,a_K^r)$ such that for any $i\in[\K]$, $\ket{\tilde{a}_{i}^{r}}\bra{\tilde{a}_{i}^{r}}\otimes I_i^r := R_{i}^r(\ket{a_{i}^{r}}\bra{a_{i}^{r}}\otimes I_i^r)(R_{i}^r)^{
\dagger}$. Recall that the probability of the verifier asking a question $q^r:=(q_1,...,q_K)$ in the $r^{th}$ round assuming the interaction history $h^{r-1}$ is denoted by $\pi(q^r\mid h^{r-1})$.
Then,

\begin{align*}
&\Pr_{\tilde{S}}\left[\tilde{h}^R\right] = \prod_{r=1}^\R \pi(\tilde{q}^r|\tilde{h}^{r-1})\cdot \Pr_{\tilde{S}}\left[\tilde{a}^{r} \mid \tilde{h}^{r-1}, \tilde{q}^r\right] = \prod_{r=1}^\R \pi(\tilde{q}^r|\tilde{h}^{r-1})\cdot \Pr_{S}\left[a^r \mid h^{r-1}, q^r\right ] \notag \\
& = \prod_{r=1}^\R \pi(\tilde{q}^r|\tilde{h}^{r-1})\cdot \Pr_{S}\left[\tilde{a}^r \mid \tilde{h}^{r-1}, \tilde{q}^r\right] = \Pr_{S}(\tilde{h}^R)
\end{align*}

where the second to last equality is due to Eq.\ (\ref{eq:prob_eq}). This proves that $S,\tilde{S}$ are equivalent strategies, so Algorithm \ref{alg:classical_simulation_multi_round} is correct.\newline

\textbf{Running Time Complexity Analysis} \newline 

We now turn to discuss the running time complexity of Algorithm \ref{alg:classical_simulation_multi_round}. We begin with step 1, in which the provers draw their shared random bits. If provers executing $S$ share a stabilizer state, then the random bits can be drawn by simulating the provers classically using the Gottesman-Knill Theorem. The operation of different provers in the same round can be computed in parallel, and consequently
the time complexity of classical simulation is $O(\text{poly}(d + \sum_{r=1}^\R \max_{i=1,...,\K}\{|C_i^r|\}))$, where $|C_i^r|$ is the number of Clifford gates and classical logical gates in the circuit, and $d$ is the number of qubits held by all provers together. However, if we do not assume the provers share a stabilizer state, we do not provide a non-trivial bound on the running time complexity required to complete step 1. \newline 

Nevertheless, the next steps, which describe the computation of the classical provers during the interaction with the verifier, can be computed efficiently. Steps 2 and 3 are completed using the method for calculating the mapping of one matrix in the Pauli group to another by conjugation with a Clifford matrix, which is described in the Preliminaries section (\ref{explanation:clifford_pauli_update}). Here, too, the operation of different provers in the same round can be computed in parallel.  $d^r:=\max_{i=1,...,\K}\{d_{i,\ell^r}^r\}$ is the maximal number of qubits held by a prover in the $r^{th}$ round of the interaction. The time complexity of steps 2 and 3 together is as follows: $O(\sum_{r=1}^\R \max\{d^r, \max_{i=1,...,\K}\{|C_i^r|\}\})$. This is the overall running time complexity of the computation of the provers during the interaction, because step 4 takes time $O(\sum_{r=1}^\R d^r)$.
\end{proof}

We are now ready to prove Theorem \ref{thm: multi_round_clifford_classical}.

\begin{proof}
In the proof, we construct an interactive protocol $G_S$ based on the protocol $G$ and the strategy $S$, where the classical computations performed by $S$ on measured qubits are delegated to the verifier. To describe the game, we first assume without loss of generality that in any round of communication of $G$, each of the provers which execute $S$ performs $\ell$ measurements. After each measurement, a prover may perform a classical computation on the measurement result or not. The game $G_S$ is defined as follows:

\begin{enumerate}
    \item The verifier samples the first round's questions from the same distribution used in $G$, and sends the questions to the provers.
    \item The honest provers follow the strategy $S$, only instead of conducting classical post-processing of measurement results, they send their measurement results to the verifier. The verifier performs the appropriate classical computations for each of the provers (according to the strategy $S$), and sends each of them its corresponding computation result, which the prover then uses to continue its computation. If a prover only needs to measure qubits without performing a classical computation on the result, the verifier simply sends back to the prover the measurement result unchanged. 
    This adds $\ell$ rounds of communication in $G_S$ for each round of communication in $G$. 
    \item After all provers completed their computation which corresponds to a certain round $r\in[\R]$ in $G$, they send their answers the verifier. The verifier then sends them the next round of questions if $r<R$, and otherwise decides whether to accept or reject.
    \item At each round, when the verifier samples the next round's questions or decides whether to accept or to reject, it acts exactly as the verifier in $G$ does. That is, the verifier ignores the rounds of communication in which classical computations were delegated to it by the provers, and only considers rounds in which the provers sent values that correspond to answers to its queries.
\end{enumerate}

We denote the honest quantum strategy in the game $G_S$ by $S'$. This strategy is consistent with the strategy $S$ in the game $G$, except that in $S'$ the provers delegate their classical post-processing of measurement results to the verifier. The computation of provers executing $S'$ at each round of $G_S$ can clearly be implemented using a Clifford unitary followed by a standard basis measurement of a subset of the qubits on which the unitary acts. This is because in $S'$, the provers only perform a measurement at the end of each round's computation, and delegate their classical computation to the verifier. Therefore, according to Lemma~$\ref{lemma: multi_round_clifford_unitary_classical}$, there exists a classical strategy $\tilde{S'}$ for $G_S$ which is equivalent to the strategy $S'$ in $G_S$. \newline 

Using $\tilde{S'}$, we can naturally define a strategy $\tilde{S}$ for $G$, which is equivalent to the strategy $S$. Provers implementing $\tilde{S}$ act like those implementing $\tilde{S'}$, except rather than delegating classical computations to the verifier, they perform all the computations by themselves. The fact that $\tilde{S}$ is equivalent to $S$ stems directly from the fact that $\tilde{S'}$ is equivalent to $S'$ and the definition of the game $G_S$.

\end{proof}

\subsection{ \CliffordMIP$= \cc{MIP}$}

In this section, we state the following equality between complexity classes. It is a straightforward corollary of Theorem \ref{thm: multi_round_clifford_classical}, but we provide a full proof of it for the sake of completeness.

\begin{theorem}
\CliffordMIP $= \cc{MIP}$ 
\end{theorem}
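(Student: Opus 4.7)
The plan is to establish both inclusions $\cc{Clifford}\text{--}\cc{MIP}^\ast \subseteq \cc{MIP}$ and $\cc{MIP} \subseteq \cc{Clifford}\text{--}\cc{MIP}^\ast$ using the same verifier and the same completeness/soundness parameters $c(|x|), s(|x|)$. The first inclusion will be a direct application of Theorem~\ref{thm: multi_round_clifford_classical}, since that theorem produces, for every Clifford strategy, a classical strategy inducing the same distribution over interaction histories and hence the same acceptance probability. The second inclusion will follow from the observation that every classical strategy with shared randomness can be cast as a Clifford strategy without changing the induced distribution.

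For the embedding of classical into Clifford strategies, I would take a classical strategy $S=(P,(f_i^r)_{i,r})$ and define the Clifford strategy whose shared state is $\ket{\psi}=\sum_{s\in\{0,1\}^\lambda}\sqrt{P(s)}\,\ket{s}^{\otimes\K}$, with the $i^{\text{th}}$ prover holding the $i^{\text{th}}$ copy of $\ket{s}$. Each prover, on its first round, measures its register in the computational basis (a standard-basis measurement of qubits that are not further acted on), obtaining the shared sample $s$, and then runs the functions $f_i^r$ as classical post-processing on its measurement results and the received queries. This is a legitimate Clifford strategy in the sense of the paper since it uses no gates other than standard-basis measurements and classical post-processing, and the joint distribution over queries and answers it induces is exactly that of $S$.

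Putting the pieces together, for $\cc{Clifford}\text{--}\cc{MIP}^\ast \subseteq \cc{MIP}$ I fix $L\in\cc{Clifford}\text{--}\cc{MIP}^\ast$ with verifier $V$ and parameters $c,s$, and use the very same verifier for the $\cc{MIP}$ protocol. Completeness: for $x\in L$ there exists a Clifford strategy accepting with probability $\geq c(|x|)$, and Theorem~\ref{thm: multi_round_clifford_classical} produces an equivalent classical strategy, which accepts with the same probability because equivalence of strategies means equality of distributions over full histories and the verifier's decision is a deterministic function of the history. Soundness: for $x\notin L$, any classical provers in particular induce a distribution realizable by the embedding above as a Clifford strategy, so the Clifford soundness bound $\leq s(|x|)$ carries over. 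The direction $\cc{MIP}\subseteq\cc{Clifford}\text{--}\cc{MIP}^\ast$ is symmetric: completeness follows from the embedding of the honest classical provers as Clifford provers, and soundness follows from Theorem~\ref{thm: multi_round_clifford_classical} applied to any Clifford strategy the adversary might use.

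There is essentially no hard step here; the only thing to be careful about is that the simulations promised by Theorem~\ref{thm: multi_round_clifford_classical} and by the embedding preserve the \emph{full} distribution over transcripts, not merely the acceptance probability, so that both the completeness and the soundness parameters are preserved exactly and the $1/\operatorname{poly}(|x|)$ gap between $c$ and $s$ is unaffected. I would therefore be explicit in the write-up that ``equivalent strategies'' as defined earlier (equal joint distribution over histories) suffices to transfer both $c$ and $s$ between the two models with no loss.
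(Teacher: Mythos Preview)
Your proof is correct and follows essentially the same two-inclusion structure as the paper's own proof, with Theorem~\ref{thm: multi_round_clifford_classical} doing the work in both directions (completeness for one inclusion, soundness for the other) and the embedding of classical strategies into Clifford strategies handling the remaining cases. The only difference is expository: you spell out an explicit embedding via the shared state $\ket{\psi}=\sum_{s}\sqrt{P(s)}\,\ket{s}^{\otimes\K}$ followed by computational-basis measurement, whereas the paper simply asserts that Clifford provers can ``replicate the same classical operations''; your version is more careful but not a different argument.
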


\begin{proof} We will show that \CliffordMIP$= \cc{MIP}$ by proving both inclusions. \newline 

\textbf{(1) \CliffordMIP $ \subseteq \cc{MIP}$}

Let $L \in$ \CliffordMIP. Then $L$ is decided by an interactive proof system $G$ where a classical verifier interacts with quantum provers limited to Clifford gates and classical post-processing of measurement results. By Theorem~\ref{thm: multi_round_clifford_classical}, any such quantum strategy $S$ can be replaced by an equivalent classical strategy $S^\ast$ without loss of performance.\newline 

\emph{Completeness:} If $x \in L$, the quantum provers have a strategy $S$ that convinces the verifier to accept with probability at least $c(|x|)$. The equivalent classical strategy $S^\ast$ achieves the same acceptance probability, satisfying the completeness condition of $\cc{MIP}$.

\emph{Soundness:} If $x \notin L$, no strategy using Clifford gates and classical post-processing of measurement results can convince the verifier to accept with probability greater than $s(|x|)$, hence no classical strategy can do it as well. \newline 

Thus, $L \in \cc{MIP}$. \newline 

\textbf{(2) $\cc{MIP} \subseteq$ \CliffordMIP}

Let $L \in \cc{MIP}$. Then there exists an interactive proof system where a classical verifier interacts with classical provers deciding $L$ with completeness $c(|x|)$ and soundness $s(|x|)$. We show that this protocol has the same completeness and soundness parameters in the \CliffordMIP setting.\newline 

\emph{Completeness:} If $x \in L$, there exist classical provers that convince the verifier to accept with probability at least $c(|x|)$. The \CliffordMIP provers replicate this strategy using the same classical operations, thereby achieving the same acceptance probability.

\emph{Soundness:} If $x \notin L$, no classical strategy can convince the verifier to accept with probability greater than $s(|x|)$. Hence, if there existed a strategy in the \CliffordMIP setting surpassing this soundness parameter, it would contradict Theorem \ref{thm: multi_round_clifford_classical}. \newline

Thus, $L \in$ \CliffordMIP. 

\end{proof}

\section{Non-Local Games with At Most One Non-Clifford Prover}
In this section, we address a question posed by Kalai et al.\ \cite{kalai2023quantum} (for details, see the Introduction section) by proving the following theorem.

\begin{theorem}
\label{thm:non_local_games}
Let \( G = \left( Q, A, \pi ,v\right) \) be a non-local game with \( K \) provers, and let \( S = \left( \ket{\psi},\ (M_i)_{i=1}^\K \right) \) be a quantum strategy for \( G \). Suppose there exists an implementation of \( S \), denoted \( (C_i)_{i=1}^\K \), such that for all but at most one index $i\in[\K]$, the circuit \( C_i \) consists only of Clifford gates and classical post-processing of measurement results. Then, there exists a classical strategy for \( G \) that is equivalent to \( S \).
\end{theorem}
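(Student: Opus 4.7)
The plan is to reduce the statement to Theorem~\ref{thm: multi_round_clifford_classical}. Let prover $\K$ be the (potentially) non-Clifford prover without loss of generality, so that the remaining $\K-1$ provers satisfy the hypothesis of Theorem~\ref{thm: multi_round_clifford_classical}. The key observation is that Algorithm~\ref{alg:classical_simulation_multi_round} provides a state-independent local function: for each Clifford prover $i$, an output map $\tilde{f}_i(q_i^*, a_i^*, q_i)$ such that, for any shared state $\rho$ on the $\K-1$ provers' qubits and any pre-sample $(\vec{q}^*_{-\K}, \vec{a}^*_{-\K})$ drawn from the quantum distribution induced by the Clifford POVMs on $\rho$, the joint outputs $(\tilde{f}_i(q_i^*, a_i^*, q_i))_{i=1}^{\K-1}$ are distributed as $\Pr_S[\vec{a}_{-\K} \mid \vec{q}_{-\K}, \rho]$. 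The strategy $\tilde{S}$ to be constructed treats prover $\K$'s action as an online conditional sampler that Bayes-reweights the pre-sample so that the Clifford provers' outputs implicitly correspond to the post-measurement state $\rho_{q_\K, a_\K}$, even though the Clifford provers themselves never see $(q_\K, a_\K)$.

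Fix arbitrary hardcoded queries $\vec{q}^*_{-\K} = (q_1^*, \dots, q_{\K-1}^*)$. The shared randomness of $\tilde{S}$ encodes (i) a tuple $\vec{a}^*_{-\K}$ sampled from the \emph{marginal} quantum distribution $\Pr_S[\vec{a}^*_{-\K} \mid \vec{q}^*_{-\K}]$, given by the Born rule on the reduced state $\rho_{-\K} = \mathrm{Tr}_\K \ket{\psi}\bra{\psi}$ with the Clifford POVMs; and (ii) for each $q \in Q$, a value $a_\K(q)$ sampled from the quantum conditional $\Pr_S[A_\K = a_\K(q) \mid Q_\K = q, \vec{Q}^*_{-\K} = \vec{q}^*_{-\K}, \vec{A}^*_{-\K} = \vec{a}^*_{-\K}]$. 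At game time, each Clifford prover $i \in [\K-1]$ locally computes $\tilde{a}_i = \tilde{f}_i(q_i^*, a_i^*, q_i)$, and prover $\K$ outputs $a_\K(q_\K)$; every prover's answer depends only on its own question and the shared randomness, so locality is preserved.

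Equivalence is verified by expanding $\Pr_{\tilde{S}}[\vec{a} \mid \vec{q}]$ as $\sum_{\vec{a}^*_{-\K}} \Pr_S[\vec{a}^*_{-\K} \mid \vec{q}^*_{-\K}] \cdot \Pr_S[a_\K \mid q_\K, \vec{q}^*_{-\K}, \vec{a}^*_{-\K}] \cdot \prod_{i=1}^{\K-1} [\tilde{f}_i(q_i^*, a_i^*, q_i) = a_i]$. By the product rule together with the no-signaling identity $\Pr_S[\vec{a}^*_{-\K} \mid \vec{q}^*_{-\K}, q_\K] = \Pr_S[\vec{a}^*_{-\K} \mid \vec{q}^*_{-\K}]$, the first two factors jointly rewrite as $\Pr_S[a_\K \mid q_\K] \cdot \Pr_S[\vec{a}^*_{-\K} \mid \vec{q}^*_{-\K}, q_\K, a_\K]$. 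The remaining sum over $\vec{a}^*_{-\K}$ is now precisely the correctness statement of Algorithm~\ref{alg:classical_simulation_multi_round} applied with pre-sample drawn from the quantum distribution on the post-measurement state $\rho_{q_\K, a_\K}$, so it evaluates to $\Pr_S[\vec{a}_{-\K} \mid \vec{q}_{-\K}, q_\K, a_\K]$. Combining, $\Pr_{\tilde{S}}[\vec{a} \mid \vec{q}] = \Pr_S[a_\K \mid q_\K] \cdot \Pr_S[\vec{a}_{-\K} \mid \vec{q}_{-\K}, q_\K, a_\K] = \Pr_S[\vec{a} \mid \vec{q}]$, establishing equivalence. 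The hard part will be recognizing that the Bayes reweighting absorbed into prover $\K$'s conditional sampler transforms the marginal pre-sample (which the Clifford provers actually see) into the conditional pre-sample on $\rho_{q_\K, a_\K}$ required for $\tilde{f}_i$ to produce the correct distribution; this is the key mechanism allowing the Clifford provers to operate without direct access to $(q_\K, a_\K)$, and it relies critically on the single-round structure, consistent with the tightness noted for CHSH.
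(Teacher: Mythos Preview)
Your argument is correct and follows the same overall architecture as the paper: the first $\K-1$ provers run Algorithm~\ref{alg:classical_simulation_multi_round} off a hard-coded pre-sample, while prover~$\K$ answers by sampling from the appropriate quantum conditional. The difference lies in which side you condition on to establish equivalence. The paper imagines the Clifford provers acting first and then reopens the proof of Lemma~\ref{lemma: multi_round_clifford_unitary_classical} to show that the Pauli corrections $R_i$ act trivially on prover~$\K$'s register, yielding $\sigma_{\K,\tilde a}=\sigma_{\K,a}$; this directly gives $\Pr_S[\tilde a_\K\mid \tilde q_{-\K},\tilde q_\K,\tilde a_{-\K}]=\Pr_S[\tilde a_\K\mid q^*_{-\K},\tilde q_\K,a^*_{-\K}]$. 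You instead imagine prover~$\K$ acting first, use no-signaling and Bayes to reweight the pre-sample distribution into the Born-rule distribution on the post-measurement state $\rho_{q_\K,a_\K}$, and then invoke the correctness of Algorithm~\ref{alg:classical_simulation_multi_round} on that state. Your route treats Theorem~\ref{thm: multi_round_clifford_classical} essentially as a black box, needing only the (evident) observation that the correction maps $\tilde f_i$ are state-independent and that Lemma~\ref{lemma: multi_round_clifford_unitary_classical}'s proof goes through for an arbitrary initial density matrix (which it does, since every line is written in terms of $\rho$ and never uses purity; alternatively one can purify into prover~$\K$'s register). The paper's route is more hands-on but makes the mechanism---that the Clifford provers' Pauli corrections never touch prover~$\K$'s qubits---completely explicit. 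Both are valid; yours is slightly more modular, the paper's slightly more transparent about where the single-round restriction is used.
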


From Theorem \ref{thm:non_local_games}, we immediately obtain the following corollary.

\begin{corollary}
Let there be a non-local game $G = \left( Q,A, \pi,v \right)$ with $K$ provers, and a quantum strategy $S = \left( \ket{\psi},\ \left( M_i\right)_{i=1}^\K\right)$ for $G$. If the success probability of $S$ in $G$ exceeds that of any classical strategy, then for any implementation $(C_i)_{i=1}^\K$ of $S$, at most $K-2$ of the circuits $C_i$ ($i\in[\K]$) consist only of Clifford gates in addition to classical post-processing of measurement results.
\end{corollary}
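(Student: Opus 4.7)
The plan is to prove the corollary by straightforward contradiction, leveraging Theorem~\ref{thm:non_local_games} as a black box. Suppose for contradiction that there exists an implementation $(C_i)_{i=1}^\K$ of $S$ in which at least $\K-1$ of the circuits $C_i$ consist only of Clifford gates and classical post-processing of measurement results. Then, by definition, at most one index $i\in[\K]$ corresponds to a circuit that lies outside the Clifford Computational Model.

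With this hypothesis in place, the setting is exactly the one covered by Theorem~\ref{thm:non_local_games}: a non-local game $G$ together with a quantum strategy $S$ whose implementation has at most one non-Clifford prover circuit. Applying that theorem, I obtain a classical strategy $\tilde{S}$ for $G$ which is equivalent to $S$, in the sense that the two strategies induce identical joint distributions over interaction histories.

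The key consequence of equivalence is that $\tilde{S}$ and $S$ assign the same probability to every accepting history, and therefore achieve \emph{exactly} the same success probability in $G$. This contradicts the hypothesis of the corollary, which asserts that the success probability of $S$ exceeds that of every classical strategy. Hence the assumption that at least $\K-1$ of the circuits $C_i$ are Clifford must fail, so at most $\K-2$ of them can consist solely of Clifford gates and classical post-processing of measurement results.

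There is no genuine obstacle here beyond carefully invoking Theorem~\ref{thm:non_local_games}; the only subtlety worth flagging explicitly is that ``equivalent'' in the theorem is the strong notion of identical interaction distributions, which immediately implies equality of the success probabilities computed by the acceptance predicate $v$. Once that observation is recorded, the contrapositive argument completes the proof in a single step.
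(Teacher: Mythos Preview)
Your proof is correct and matches the paper's approach: the paper states that the corollary follows immediately from Theorem~\ref{thm:non_local_games} without giving a separate argument, and your contrapositive is precisely that immediate deduction. The only point you make explicit that the paper leaves implicit is that equivalence of strategies implies equal success probabilities, which is indeed trivial from the definition.
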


We now turn to present the proof of Theorem \ref{thm:non_local_games}.
\begin{proof}
Without loss of generality, assume that the circuits \( C_1, C_2, \dots, C_{\K-1} \) consist only of Clifford gates and classical post-processing of measurement results; that is, the \( \K \)-th prover may perform general quantum operations. \newline 

Consider the interaction between the verifier \( V \) of the game \( G \) and $\K$ provers executing the strategy \( S \). By Theorem~\ref{thm: multi_round_clifford_classical}, there exists a classical strategy \( \tilde{S} \) for \( \K-1 \) provers, such that when these $\K-1$ provers interact with $V$ using $\tilde{S}$, the resulting distribution of questions and answers is identical to that produced by the first $\K-1$ provers executing strategy $S$ with $V$. \newline 

An interaction of \( \K-1 \) provers with \( V \) is defined as follows:

\begin{enumerate}
    \item \( V \) draws \( \K \) questions \( q = (q_1, q_2, \dots, q_\K) \) at random from the distribution \( \pi \).
    \item It sends the first \( \K-1 \) questions \( q_1, q_2, \dots, q_{\K-1} \) to the corresponding provers and discards the \( K \)-th question \( q_K \).
    \item The \( \K-1 \) provers send their answers \( (a_1, a_2, \dots, a_{\K-1}) \) back to \( V \).
\end{enumerate}

We now define a classical strategy for \( \K \) provers which is equivalent to $S$:

\begin{algorithm}[H]
\caption{Classical Simulation of a Mostly-Clifford Strategy in a Non-Local Game}
\label{alg:classical_simulation_non_local_game}
\begin{algorithmic}[1]
\STATE A fixed arbitrary sequence of questions \( q := (q_1, \dots, q_{\K}) \) is hard-coded into the algorithm. The corresponding answers \( a: = (a_1, \dots, a_\K) \) are sampled from the distribution \( \Pr_{S}[a \mid q] \). The provers share an encoding of the interaction \( h = (q, a) \).
\STATE While interacting with the verifier, the first \( \K-1 \) provers execute the classical strategy \( \tilde{S} \) using Algorithm~\ref{alg:classical_simulation_multi_round} with the hard-coded questions \( q \).
\STATE The \( \K \)-th prover, upon receiving a question \( \tilde{q}_\K \) from \( V \), samples its answer from the conditional distribution \( \Pr_{S}[a_\K \mid q_1,...,q_{\K-1}, \tilde{q}_\K, a_1, \dots, a_{\K-1}]\). This is the distribution of answers of the last out of the $\K$ provers executing the quantum strategy $S$, conditioned on the questions sent to the provers being $q_1,...,q_{\K-1},\tilde{q}_\K$ and on the first $\K-1$ provers returning the answers $a_1,...,a_{\K-1}$.
\end{algorithmic}
\end{algorithm}

Let there be quantum provers denoted \( P_1, \dots, P_\K \) that execute \( S \) with the implementation \( C \). Assume, without loss of generality, that each prover measures all the qubits they hold as the last step of their computation and returns the measurement outcomes as their answer. This assumption does not limit generality, since if only a subset of the measurement results serves as the prover's answer, or if only a subset of the qubits are measured as the final step, then measuring all qubits and sending all outcomes does not change the probability distribution of the measurement results over the relevant subset of qubits. \newline 

We denote by $S'$ the classical strategy of provers executing Algorithm \ref{alg:classical_simulation_non_local_game} with respect to the quantum strategy $S$. For any possible interaction with the verifier \(\tilde{h} = (\tilde{q}, \tilde{a})\), let \( \Pr_{S'}[\tilde{a} \mid \tilde{q}] \) denote the probability that provers which execute the strategy $S'$ return the answers \( \tilde{a}:=(\tilde{a}_1,...,\tilde{a}_\K)\) upon receiving the questions \( \tilde{q}:=(\tilde{q}_1,...,\tilde{q}_\K) \) from the verifier. Similarly, let \(  \Pr_{S}[\tilde{a} \mid \tilde{q}] \) denote the analogous probability for provers that execute the strategy $S$. To prove the correctness of Algorithm \ref{alg:classical_simulation_non_local_game}, we need to show that for any such interaction, \( \Pr_{S'}[\tilde{a} \mid \tilde{q}] = \Pr_{S}[\tilde{a} \mid \tilde{q}] \). \newline 

To simplify our proof, we assume that the computation performed by any of the first $\K-1$ provers in the implementation $C$ of the strategy $S$ consists of applying a Clifford unitary followed by a standard basis measurement. The provers may deviate from this scheme and perform intermediate measurements as well as classical post-processing of measurement results. To extend our proof from the constrained case (no intermediate measurement nor classical computations on measured qubit) which we handle explicitly to the general case, the classical computations performed by the first $\K-1$ provers on measured qubits can be delegated to the verifier—the same approach we took in order to generalize Lemma \ref{lemma: multi_round_clifford_unitary_classical}
to Theorem \ref{thm: multi_round_clifford_classical}. For details, see the proof of Theorem \ref{thm: multi_round_clifford_classical}. \newline 

We can represent $C_i$ as a tuple $(U_{i}, \Lambda_{i})$, where $U_{i}$ is a unitary composed of Clifford gates and $\Lambda_{i}$ is a standard basis measurement of all qubits held by the $i^{th}$ prover.  $C_i$ is equivalent to applying $U_i$ and then $\Lambda_i$. \newline 

We denote the questions which the provers receive from the verifier during the interaction by $\tilde{q}:=(\tilde{q}_1,...,\tilde{q}_\K)$, and the interaction encoded as the shared randomness of the provers by \( h = (q, a) \). Consider the following operators, which are defined in Algorithm~\ref{alg:classical_simulation_multi_round}:

\[\forall i\in[\K]-1: R_{i} := U_{i} (X_{q_i \oplus \tilde{q}_i}\otimes I_{i}) (U_{i})^{\dagger}\]

Since the first $\K-1$ out of $\K$ provers executing algorithm 2 utilize Algorithm~\ref{alg:classical_simulation_multi_round}, their answers are $(\tilde{a}_1, ..., \tilde{a}_{\K-1})$, such that:
\[
\forall i\in[\K]-1: \ket{\tilde{a}_i}\bra{\tilde{a}_i} = R_{i} \ket{a_i}\bra{a_i} R_{i}^\dagger.
\]

We have shown in the proof of Theorem \ref{thm: multi_round_clifford_classical} that for all \( i\in[\K-1] \) \( R_{i} \) is in the Pauli group, and consequently indeed for any $i\in[\K-1]$ there exists a bit-string $\tilde{\alpha}_i$ which meets the above condition. \newline 

Assume, without loss of generality, that the $\K-1$ Clifford provers executing the quantum strategy \( S \) with the implementation \( C \) perform their computation and send their answers back to the verifier before the $\K$-th (non-Clifford) prover performs any opearation on its qubits. Denote by \( \rho_{\tilde{q}, \tilde{a}} \) the shared state of all \( \K \) provers executing $S$ after the first \( \K-1 \) provers send their answers to the verifier, and before the \( \K \)-th prover performs any operation, conditioned on the series of questions sent by the verifier to the provers being \( \tilde{q}:=(\tilde{q}_1,...,\tilde{q}_\K) \), and on the first \( \K-1 \) provers having returned the answers \( \tilde{a} = (\tilde{a}_1, \dots, \tilde{a}_{\K-1}) \). $\rho_{q,a}$ is defined analogously. In the proof of Theorem \ref{thm: multi_round_clifford_classical}, we have shown that the following holds:

\[\rho_{\tilde{q},\tilde{a}} = ((\bigotimes_{i=1}^{\K-1}R_{i})\otimes I) \rho_{q,a} ((\bigotimes_{i=1}^{\K-1}R_{i})\otimes I)^\dagger\]
where $I$ is the identity operator acting on the qubits held by the $\K^{th}$ prover. Since we assumed the first $\K-1$ provers measure all their qubits, then $\rho_{\tilde{q},\tilde{a}}, \rho_{q,a}$ can be represented by $\ket{\tilde{a}}\bra{\tilde{a}}\otimes\sigma_{\K,\tilde{a}}, \ket{a}\bra{a}\otimes\sigma_{\K,a}$, respectively, where  $\sigma_{\K,b}, \sigma_{\K,a}$ are density matrices representing quantum states of the qubits held by the $\K^{th}$ prover. Hence, the following equalities hold:
\begin{align}
& \ket{\tilde{a}}\bra{\tilde{a}}\otimes\sigma_{\K,\tilde{a}} = \rho_{\tilde{q},\tilde{a}} = ((\bigotimes_{i=1}^{\K-1}R_{i})\otimes I) \rho_{q,a} ((\bigotimes_{i=1}^{\K-1}R_{i})\otimes I)^\dagger \notag \\
& = ((\bigotimes_{i=1}^{\K-1}R_{i})\otimes I) (\ket{a}\bra{a}\otimes\sigma_{\K,a}) ((\bigotimes_{i=1}^{\K-1}R_{i})\otimes I)^\dagger = (\bigotimes_{i=1}^{\K-1}R_{i})(\ket{a}\bra{a})(\bigotimes_{i=1}^{\K-1}R_{i})^\dagger \otimes \sigma_{\K,a} =  \ket{\tilde{a}}\bra{\tilde{a}}\otimes\sigma_{\K,a}
\end{align}

It follows that 
\begin{align}
\label{eq:res_state_equality_1}
\sigma_{\K,\tilde{a}}=\sigma_{\K,a}  
\end{align}

We denote by \( \Pr_{S'}^{\K-1}[\tilde{a}_1,...,\tilde{a}_{\K-1} \mid \tilde{q}_1,...,\tilde{q}_{\K-1}] \) the probability that the first \( \K-1 \) classical provers which execute the strategy $S'$ return the answers \( \tilde{a} = (\tilde{a}_1, \dots, \tilde{a}_{\K-1}) \) upon receiving the questions \( (\tilde{q}_1,...,\tilde{q}_{\K-1}) \). Analogously, we denote by \( \Pr_{S}^{\K-1}[\tilde{a}_1,...,\tilde{a}_{\K-1} \mid \tilde{q}_1,...,\tilde{q}_{\K-1}] \) the corresponding probability for the first \( \K-1 \) out of the \( \K \) provers which execute the strategy \( S \). Theorem \ref{thm: multi_round_clifford_classical} proves the correctness of Algorithm~\ref{alg:classical_simulation_non_local_game} and hence guarantees that \( \Pr_{S'}^{\K-1}[\tilde{a}_1,...,\tilde{a}_{\K-1} \mid \tilde{q}_1,...,\tilde{q}_{\K-1}] =  \Pr_{S}^{\K-1}[\tilde{a}_1,...,\tilde{a}_{\K-1} \mid \tilde{q}_1,...,\tilde{q}_{\K-1}]                                         \).\newline 

Furthermore, \( \Pr_{S'}^{\K}[\tilde{a}_\K \mid \tilde{q}_1,...,\tilde{q}_{\K-1},\tilde{q}_\K, \tilde{a}_1,...,\tilde{a}_{\K-1}] \) denotes the probability that the last of the $\K$ provers executing the strategy $S'$ returns the answer \( \tilde{a}_\K \), conditioned on all provers receiving the questions \( \tilde{q}:=(\tilde{q}_1,...,\tilde{q}_\K) \) and the first \( \K-1 \) provers returning the answers \( \tilde{a} = (\tilde{a}_1, \dots, \tilde{a}_{\K-1}) \). \( \Pr_{S}^{\K}[\tilde{a}_\K \mid \tilde{q}_1,...,\tilde{q}_{\K-1},\tilde{q}_\K, \tilde{a}_1,...,\tilde{a}_{\K-1}] \) denotes the corresponding probability for the \( K \)-th prover executing \( S \). By the definition of the strategy of the \( K \)-th prover following Algorithm~\ref{alg:classical_simulation_non_local_game},

\begin{align}
\text{Pr}_{S'}^{\K}[\tilde{a}_\K \mid \tilde{q}_1,...,\tilde{q}_{\K-1},\tilde{q}_\K, \tilde{a}_1,...,\tilde{a}_{\K-1}] = \text{Pr}_{S}^{\K}[\tilde{a}_\K \mid q_1,...,q_{\K-1},\tilde{q}_\K, a_1,...,a_{\K-1}] 
\end{align}

Due to Eq.\ \eqref{eq:res_state_equality_1}, \( \Pr_{S}^{\K}[\tilde{a}_\K \mid q_1,...,q_{\K-1},\tilde{q}_\K, a_1,...,a_{\K-1}]  =\Pr_{S}^{\K}[\tilde{a}_\K \mid \tilde{q}_1,...,\tilde{q}_{\K-1},\tilde{q}_\K, \tilde{a}_1,...,\tilde{a}_{\K-1}] \). Therefore, 

\begin{align}
\text{Pr}_{S'}^{\K}[\tilde{a}_\K \mid \tilde{q}_1,...,\tilde{q}_{\K-1},\tilde{q}_\K, \tilde{a}_1,...,\tilde{a}_{\K-1}] = \text{Pr}_{S}^{K}[\tilde{a}_K \mid \tilde{q}_1,...,\tilde{q}_{\K-1},\tilde{q}_\K, \tilde{a}_1,...,\tilde{a}_{\K-1}] 
\end{align}

We can now state the equality which shows the correctness of Algorithm~\ref{alg:classical_simulation_non_local_game}:
\begin{align}
\Pr_{S}[\tilde{a}_1,...,\tilde{a}_\K \mid \tilde{q}_1,...,\tilde{q}_\K]=\Pr_{S}[\tilde{a}_1,...,\tilde{a}_{\K-1} \mid \tilde{q}_1,...,\tilde{q}_{\K}]\cdot \Pr_{S}[\tilde{a}_{\K} \mid \tilde{q}_1,...,\tilde{q}_{\K-1}, \tilde{q}_K, \tilde{a}_1,...,\tilde{a}_{\K-1}]\notag & = \\
\Pr_{S'}[\tilde{a}_1,...,\tilde{a}_{\K-1} \mid \tilde{q}_1,...,\tilde{q}_{\K}]\cdot \Pr_{S'}[\tilde{a}_{\K} \mid \tilde{q}_1,...,\tilde{q}_{\K-1}, \tilde{q}_\K, \tilde{a}_1,...,\tilde{a}_{\K-1}]=\Pr_{S'}[\tilde{a}_1,...,\tilde{a}_\K \mid \tilde{q}_1,...,\tilde{q}_\K]
\end{align}

Hence, $S,S'$ are equivalent strategies. Due to the fact that Algorithm~\ref{alg:classical_simulation_non_local_game} requires classically simulating a general quantum computation in step 3, we do not provide a non-trivial bound on its
running-time complexity.
\end{proof}

\clearpage

\end{document}